\newcommand\numberthis{\addtocounter{equation}{1}\tag{\theequation}}
\newcommand{\opar}{\overline{\partial}}
\newcommand{\proj}[1]{\mathbb{CP}^{#1}}
\renewcommand{\d}{\mathrm{d}}
\newcommand{\del}{\partial}
\newcommand{\delbar}{\Bar{\del}}
\newcommand{\C}{\mathbb{C}}
\newcommand{\CP}{\mathbb{CP}}
\newcommand{\cL}{\mathcal{L}}
\newcommand{\vCY}{\text{Vol}_{\text{CY}}}
\newcommand{\gCY}{g_{\text{CY}}}
\newcommand{\hrs}{h_{\rm RootSc}}
\newcommand{\hcan}{h_{\rm hom}}
\newcommand{\hperm}{h_{\rm perm}}
\newcommand{\piperm}{\pi_{\rm perm}}
\newcommand{\cymetric}{\texttt{cymetric} }
\newtheorem{proposition}{Proposition}
\theoremstyle{remark}
\theoremstyle{definition}
\newtheorem{definition}{Definition}[section]
\definecolor{airforceblue}{rgb}{0.36, 0.54, 0.66}
\newcommand{\mo}[1]{{\color{orange}  #1}}
\title{       {\Large \bf Learning Group Invariant Calabi--Yau Metrics by Fundamental Domain Projections}}
\author{
Yacoub Hendi\footnote{yacoub.hendi@math.uu.se}, \, Magdalena Larfors\footnote{magdalena.larfors@physics.uu.se}, \, Moritz Walden\footnote{ moritz.walden@physics.uu.se}
}
\date{\today}
\begin{document}

\maketitle
\begin{center} {{\it Department of Physics and Astronomy \& Department of Mathematics, 
\\Uppsala University,\\
       L\"agerhyddsv. 1, SE-751 20 Uppsala, Sweden.}}\\

\end{center}

\begin{abstract}
\noindent
We present new invariant machine learning models that approximate 
the Ricci-flat metric on Calabi--Yau (CY)  manifolds with discrete symmetries. We accomplish this by combining the $\phi$-model of the \cymetric package with non-trainable, $G$-invariant, canonicalization layers that project the $\phi$-model's input data (i.e. points sampled from the CY geometry) to the fundamental domain of a given symmetry group $G$. These $G$-invariant layers are easy to concatenate, provided one compatibility condition is fulfilled, and combine well with spectral $\phi$-models. Through experiments  on different CY geometries, we find that, for fixed point sample size and training time, canonicalized models give  slightly more accurate metric approximations than the standard $\phi$-model.  The method may also be used to the compute Ricci-flat metric on smooth CY quotients. We demonstrate this aspect by experiments on a smooth $\mathbb{Z}^2_5$ quotient of a 5-parameter quintic CY manifold.

\end{abstract}

\thispagestyle{empty}
\setcounter{page}{0}
\newpage

\tableofcontents
\clearpage

\clearpage
\section{Introduction}
\label{sec:intro}

String compactifications lie at the heart of many formal and phenomenological studies of string theory. By compactifying some 
of the spatial dimensions of the space-time probed by string theory one can construct models that match observations from particle physics and cosmology, or discover how different limits of string theory are related by duality. The topology and geometry formed by the compactified dimensions are key to these constructions, and there is an interesting interplay between physics and geometric constraints. Among the compact spaces, Calabi--Yau manifolds provide a sweet spot; their mathematics is highly non-trivial, yet examples can be constructed in the thousands.

Calabi--Yau  (CY) manifolds are complex, K\"ahler manifolds, with  vanishing first Chern class; by the Calabi--Yau theorem these spaces admit a Ricci-flat metric, which is unique in a given K\"ahler class.  Yau \cite{Yau:1978cfy} has proven that the K\"ahler form $J$ associated to the Ricci-flat CY metric $\gCY$ satisfies a complex Monge-Amp\`ere (MA) equation, which it reads, in $n$ complex dimensions,
\begin{equation}
        J \wedge J \wedge ...\wedge J = \kappa \, \Omega \wedge \overline{\Omega} \; ,
\end{equation}
where we wedge $n$ copies of $J$ on the left hand side, $\Omega$ denotes the unique holomorphic top form on the CY manifold, and $\kappa$ is a complex constant. There is no known analytic solution to this equation for $n > 1$.\footnote{In one complex dimension, the CY is a torus, which admits a flat metric. In 2D, i.e. for a K3 surface, a duality argument has been proposed to determine the Ricci-flat metric \cite{Kachru:2018van,Kachru:2020tat}.}  

Lacking an analytic solution to a partial differential equation (PDE), one may try to find numerical solutions. This is indeed possible, owing to the fact that, since $J$ is related to any reference K\"ahler for $J'$ via
\begin{equation}
        J  = J' + i \partial \bar{\partial} \phi \; ,
\end{equation} 
the MA equation can be rewritten as a second order PDE for a real function $\phi$. 
One approach to solve the PDE for $\phi$, is to expand $\phi$ in  degree-$k$ polynomials on the CY, an idea that goes back to Tian \cite{tian90}, and then solve for the expansion coefficients. This can be done using a fixed-point iteration scheme, known as the Donaldson algorithm, \cite{donaldson2005numerical,Douglas:2006hz,Douglas:2006rr,Braun:2007sn}, or via functional minimization \cite{Headrick:2005ch,Headrick:2009jz,Cui:2019uhy}; collectively,  the metrics obtained in this way are called {\it algebraic} metrics. Since the polynomial, or {\it spectral}, basis is complete in the limit $k \to \infty$ \cite{donaldson2005numerical,Headrick:2005ch}, this method will indeed give the unique Ricci-flat metric in this limit. However, for most CY spaces, the spectral basis dimension grows rapidly with $k$, implying that practical applications are limited to rather small $k$.\footnote{For a CY hypersurface in $\mathbb{CP}^N$, the dimension of the polynomial basis scales like $k^{2N}$ for large $k$ \cite{Headrick:2009jz}.} For symmetric CY geometries, using an invariant spectral basis, of smaller dimension, allows us to  compute the metric at larger $k$  (in fact, the most accurate approximations of CY metrics are obtained on highly symmetric Fermat quintics \cite{Headrick:2005ch}).\footnote{However, constructing the invariant basis sometimes comes at a non-trivial computational cost, as we will discuss in section \ref{sec:freecanon}.}

With the advent of machine learning (ML), novel opportunities, based on neural networks, have arisen to approximate CY metrics. Since neural networks are universal  approximators \cite{Cybenko:1989aaa,Leshno:1993aaa,Pinkus:1999aaa}, they can learn any function; thus $\phi$, or why not the Ricci-flat CY metric directly, may be learned by an ML model. Developing machine learning methods for CY geometry has been an active field of research in later years \cite{Ashmore:2019wzb,Anderson:2020hux,Jejjala:2020wcc,Douglas:2020hpv,Douglas:2021zdn,Ashmore:2021ohf,Larfors:2021pbb,Larfors:2022nep,Gerdes:2022nzr,Berglund:2022gvm,Ahmed:2023cnw,Halverson:2023ndu,Anderson:2023viv}. A major practical gain of ML methods is that they are less dependent on expansions in the spectral basis. While ML methods may be used in combination with the spectral expansion of $\phi$, they can also make away completely with this crutch.  As such, the ML methods remain computationally feasible on generic CY manifolds, as evidenced in Refs.~\cite{Larfors:2021pbb,Larfors:2022nep,Anderson:2023viv}. 

Admittedly, the accuracy of ML algorithms has been observed to improve on more symmetric spaces \cite{Douglas:2020hpv,Douglas:2021zdn,Berglund:2022gvm,Anderson:2023viv}, and the training time (for similar performance) is slightly longer on non-symmetric spaces \cite{Anderson:2023viv}. These differences, which are small compared to the effects seen in pre-ML methods, indicate that the ML models learn the symmetries of the data they are trained on. Symmetries have also been used to test the accuracy of ML predictions of CY metrics \cite{Jejjala:2020wcc}. However, discrete symmetries have, so far, not played any significant role in the design  of ML models for CY metrics. 
In this paper, we will perform a first exploration of this topic, using novel methods to construct group invariant ML models.\footnote{In ML methods based on spectral basis, such as \cite{Anderson:2020hux,Gerdes:2022nzr} and \cite{Douglas:2020hpv,Douglas:2021zdn}, one can restrict to group invariant elements of the basis; to the knowledge of the authors this has not been explored to any greater degree.}  

The motivation for our study is two-fold. First, as we will review below, in the ML literature there is ample evidence that for symmetric data, equivariant and invariant models outperform  networks that are agnostic about symmetries in the data.  The Geometric Deep Learning program \cite{bronstein7974879,bronstein2021geometric} shows how symmetries explain the success of state-of-the-art ML models such as  CNNs, GNNs, and Transformers; ref.~\cite{Cheng:2019xrt} analyse the CNNs using the concept of covariance in theoretical physics.  Formally, it has been proven that group invariant neural networks are universal approximators for group invariant functions \cite{yarotsky:2022}, thus symmetry-enforcing ML models are as expressive as the symmetry-agnostic ones.

Secondly, from the perspective of string theory, there is  phenomenological motivation to study non-simply connected CY geometries, obtained by quotienting a CY manifold by a freely acting discrete symmetry $G$. $G$-invariant ML models provide a new way to compute the Ricci-flat metrics of such quotients. This problem has been explored using Donaldson's algorithm in the past \cite{Braun:2007sn}, with positive effects on the accuracy of the metric approximation. Ref.~\cite{Braun:2007sn} also explains the technical subtleties associated with the creation of a group invariant spectral basis; this provides further motivation to explore invariant ML methods, as do recent application of ML methods to the  computations of heterotic Yukawa couplings on CY manifolds with discrete symmetries \cite{Butbaia:2024tje,Constantin:2024yxh}.

\paragraph{Present work}
In this work, we will design group invariant ML models that provide approximations of the Ricci-flat metric of CY manifolds. Our work is motivated by the importance of symmetries in the context of spectral methods for CY metrics, and the role symmetries play in leading ML models. 

For our studies, we will explore CY geometries created as complete intersections in complex projective spaces, so-called CICYs \cite{Candelas:1987kf}. We will use the \cymetric package, introduced in \cite{Larfors:2021pbb,Larfors:2022nep}, which allows us to sample points on such spaces, and provide several ML model architectures that either predict the function $\phi$, or the metric directly. In this work, we will limit our attention to the $\phi$-model. To create group invariant $\phi$-models, we will include non-trainable layers into the neural network architecture. These layers pre-process the
the input data of the $\phi$-model, and ensure that it is $G$-invariant by projecting to a fundamental domain of the group $G$. This idea, which is often referred to as {\it canonicalization}, has recently been proposed in \cite{aslan2022group,kaba2023equivariance}. Our main contribution is a comparison of the performance and accuracy of the canonicalized ML models with the standard $\phi$-model. We also discuss the similarities and differences of the canonicalization approach with the spectral networks suggested by \cite{Berglund:2022gvm}, and how canonicalization can be combined with spectral $\phi$-models. Finally, we study how canonicalized ML models can be used to compute the metric on a non-simply connected CY manifold.\footnote{The code for the canonicalization layers can be found in \url{https://github.com/jake997/invariant_layers_cymetric}.}

This paper is organised as follows. In section \ref{sec:relwork}, we briefly review related work on machine learned, Ricci-flat CY metrics, and outline the key approaches to group invariant ML models. Section \ref{sec:mlcy} describes CY geometry and the ML model we will use for our experiments. We also recall how the spectral layer is constructed. In section \ref{sec:fundproj}, we present our main contribution: how to construct canonicalization layers for the symmetry groups relevant for CY point samples; experiments using these techniques are presented and discussed in section \ref{sec:exp}. Some technical aspects of our analysis are described in appendices. In section \ref{sec:freecanon}, we apply canonicalization layers to the computation of Ricci-flat CY metrics on smooth quotients of a 5-parameter quintic CY. Conclusions and discussions are given in section \ref{sec:conclude}.

\section{Related work}
\label{sec:relwork}

The past five years have witnessed a flurry of research activities on ML methods for Ricci-flat CY metrics \cite{Ashmore:2019wzb,Anderson:2020hux,Jejjala:2020wcc,Douglas:2020hpv,Douglas:2021zdn,Ashmore:2021ohf,Larfors:2021pbb,Larfors:2022nep,Gerdes:2022nzr,Berglund:2022gvm,Ahmed:2023cnw}, and there are several recent examples in the literature on how these ML methods may be used to address open questions in string theory \cite{Ahmed:2023cnw,Butbaia:2024tje,Constantin:2024yxh,Ashmore:2021qdf,Douglas:2024pmn}. Since the Ricci-flat CY metric is unknown, all ML models for  CY metrics rely on some sort of semi-supervised learning, where a custom loss function is  used to ensure that the learned metric is Ricci-flat. This is in reminiscent of Physics Informed Neural Networks (PINNs) \cite{RAISSI2019686}; in the parlance of PINNs, the Ricci-flatness  is encoded as a soft constraint. 

To date,  three open-source ML libraries have been developed. While we will only explore one of these libraries, the key features of the other packages may also be of interest to the reader, and are therefor given here 
\begin{itemize}
\item The \texttt{MLgeometry} package was developed for refs.~\cite{Douglas:2020hpv,Douglas:2021zdn,Douglas:2024pmn}. This package introduces the bihomogeneous (holomorphic) neural networks where the layers are bihomogeneous (holomorphic) functions of the ambient space coordinates for the CY manifold, with quadratic non-linear activation functions. These layers encode the spectral basis of the algebraic CY metrics, and the learned metric is  automatically  globally defined and K\"ahler (i.e. these are hard constraints). The Ricci-flatness condition is imposed softly via a loss function. 
\item The \cymetric package,  presented in refs.~\cite{Larfors:2021pbb,Larfors:2022nep} and further tested in ref.~\cite{Schneider:2022ssn}, instead encodes the CY metric, or alternatively the K\"ahler function, directly in a dense, shallow, fully connected neural network (typically with ReLU or GELU activation functions). Custom loss functions are used to enforce that the learned metric is globally defined, Kähler, Ricci-flat, and restricted to the K\"ahler class selected when starting the training.\footnote{The motivation for the soft encoding of all mathematical constraints on the metric, was to make the package generalize, at little cost, to other geometric constructions, beyond the CY regime.}
\item the \texttt{cyjax} package \cite{Gerdes:2022nzr} is a spectral method. The ML model learns  algebraic CY metrics; the neural net encodes an Hermitian coefficient matrix, which, when combined with the spectral basis, provides an approximation to the Ricci-flat metric. This automatically provides a globally defined, K\"ahler metric; the Ricci-flatness condition is imposed via a  loss function. 
\end{itemize}
In ref.~\cite{Berglund:2022gvm}, it was proposed that the $\phi$-model of the \cymetric package can be augmented by  a non-linear, non-trainable {\it spectral layer}, where the input data is passed to the $k=1$ spectral basis, and then fed to the dense, trainable layers of the model. The benefit of this layer is that the predicted metric is automatically globally defined and Kähler. Thus, some of the soft constraints of the $\phi$-model become hard constraints with the addition of a spectral layer. In ref.~\cite{Berglund:2022gvm}, this was shown to be of essence in learning the Ricci-flat metric on singular, or nearly singular, CY manifolds.  The spectral layer is, from a machine learning perspective, a data pre-processing step; as we will discuss below it shares some features with the ideas explored in this paper. 

In this paper, we will be interested in predicting a function $\phi: X \to \mathbb{R}$, where $X$ is a CY manifold. Working locally, we may decompose $X$ into patches which are isomorphic to $\mathbb{R}^n$. The prediction of a function
\[
f: \mathbb{R}^n \, \to \, \mathbb{R}
\]
is a standard task in machine learning. When the function domain has symmetries, invariant (and equivariant) machine learning models are of interest. This topic is extensively studied. An excellent starting point for studies of the mathematical foundations of invariant and equivariant ML models can be found in the book \cite{bronstein2021geometric}, which also provides a comprehensive list of references to this field of data science.

There are different approaches to the construction of invariant machine learning models. Following Yarotsky \cite{yarotsky:2022}, see also \cite{aslan2022group}, one may distinguish between {\it symmetrization-based} and {\it intrinsic} approaches. In the first approach, one may, for example, construct $G$-invariant models by augmenting the training data by $G$-transformed copies with the same label as the original data point.  The network will then learn that data points related by $G$-transformations give the same output, as required by invariance. When applied to unseen data, a network trained on augmented data will perform better than a network trained on non-augmented data. As discussed in ref.~\cite{JMLR:v21:20-163}, from a mathematical point of view, and under exact invariance, data augmentation is equivalent to averaging over orbits of the group $G$; it can thus be proven that this reduces the variance of the prediction of a model.

Intrinsic approaches instead produce invariant networks by imposing restrictions on the network design.  CNNs, which are equivariant under translations, are early incarnations of this idea \cite{FUKUSHIMA1982455,Lecun98}. Other network designs that are invariant under group transformations include \cite{cohen2016arXiv160207576C,cohen2018arXiv180110130C,Gens14,Ravanbakhsh2017arXiv170208389R,maron:2020}; these and related ideas are further discussed in \cite{bronstein2021geometric} and the review \cite{Gerken:2021sla}. Common to these designs is that in- or equivariance is  implemented by weight-sharing in some layer of the network.

Restricting the trainable weights of some layer in the network is not the only intrinsic way to achieve invariant/equivariant layers. Yarotsky has shown, using classical polynomial invariant theory, that an extra polynomial layer which is invariant/equivariant and complete, results in an invariant/equivariant network \cite{yarotsky:2022}. This is particularly useful for finite groups, where complete invariant layers can be constructed from a finite number of polynomial invariants. For continuous groups, invariant polynomial layers can also be constructed, but, lacking completeness, one cannot prove that they are universal invariant approximators \cite{yarotsky:2022}. The spectral layer \cite{Berglund:2022gvm} (described briefly above and in detail below) is, up to rescaling by the norm, an incarnation of such an invariant layer for the continuous group $\mathbb{C}^*$.  

Of key interest to us is the alternative intrinsic approach proposed in \cite{aslan2022group,kaba2023equivariance}. Here, a pre-processing step is used to  put the data  into a canonical form, by projection to a fundamental domain of the symmetry group.  The canonicalization  can be achieved in different ways, and we will describe this method  in more detail below. Importantly, the method   works  for both continuous and discrete groups. It is also good to note that, in contrast to weight-sharing schemes, that methods based on polynomial invariants or canonicalization are not restricted to models using neural networks. Since the concatenation of an invariant map with any other map preserves invariance, such data pre-processing steps can be used to impose invariance on virtually any type of ML model, for example the packages developed for machine learning of CY metrics.

\section{Machine learning Ricci-flat CY metrics}
\label{sec:mlcy}

In this section, we review some aspects of the construction of examples of CY manifolds, and their symmetries. A more detailed discussion of the symmetries is relegated to appendix \ref{ap:isometry}. We then describe the key points of the \cymetric package, and the ML methods we employ to learn CY metrics. For more details on the \cymetric package, we refer to \cite{Larfors:2021pbb,Larfors:2022nep}.

\subsection{CY construction}

For our experiments, we need explicit realizations of CY geometries. 
In this paper, we will restrict our studies to the Complete Intersection Calabi-Yau (CICY) manifolds \cite{Candelas:1987kf,Green:1987cr}.\footnote{It would be interesting to extend our studies to CY manifolds in toric ambient spaces. The \cymetric package has point generators also for such spaces, and the methods we develop here can be adapted to this setting. We leave this generalisation for future work.} These are constructed as the zero loci of a set of homogeneous polynomials $p_i$, $i=0,\dots,N$, in an ambient space $\mathbb{CP}^{n_1} \times ...\times \mathbb{CP}^{n_N}$, such that $\dim(CY)=n_1+\dots+n_N-N$. A prime example is the Fermat quintic, defined by the solutions of
\begin{equation} \label{eq:fermquin}
   \sum_{i=0}^4 z_i^5 = 0
\end{equation}
in $\mathbb{CP}^4$. CICY manifolds are often described in terms of configuration matrices, which in the case of the quintic is $[4|5]$. 
This class of CY manifolds is fully classified, and a detailed list of their topological properties can be found in \cite{lukas-webpage}. 

CICY manifolds may, at certain subloci in their complex structure moduli space, admit discrete symmetries.\footnote{We define CY $3$-folds to have strict $SU(3)$ holonomy; consequently, the spaces we study do not admit continuous isometries.} These are manifest as invariances of the defining polynomials of the CICY. For example, the defining equation of the Fermat quintic \eqref{eq:fermquin}
is invariant under permutations of the coordinates, multiplications of the coordinates by 5th roots of unity (up to a global multiplication by the same root, which is removed by the scaling invariance of the homogeneous coordinates), and complex conjugation; the full symmetry group  is $\mathbb{Z}_5^4 \rtimes (S^5 \times \mathbb{Z}_2)$ \cite{Braun:2007sn}. 
The freely acting symmetries of CICY manifolds have been classified in \cite{Braun:2010vc}, non-freely acting symmetries are classified in \cite{Lukas:2017vqp}, and symmetries of CICY quotients are classified in \cite{Candelas:2017ive}. The invariance of the defining polynomials translates into an isometry of the Ricci-flat CY metric (see appendix \ref{ap:isometry} for a proof).  

We want to learn the Ricci-flat metric on some CICY manifold $X$ with dimension $n$.  Starting from a reference K\"ahler form $J'$, we know, by the CY theorem, that there exists a unique K\"ahler form $J$, in the same cohomology class as $J'$, which gives a Ricci-flat metric. 

For simplicity, let us discuss the case where $X$ is realized as one hypersurface in one projective space. All our statements generalize to the case of complete intersections of hypersurfaces, so this is just for ease of presentation. Let $\iota :X\hookrightarrow\mathbb{CP}^{n+1}$ be the inclusion of $X$ in its ambient space, and $J'$ be the pullback of the K\"ahler form $\omega$ associated to the Fubini-Study metric on $\mathbb{CP}^{n+1}$ to $X$, i.e. $J'=\iota^{\star} \omega$.  Then we want to learn a real function $\phi:X\to \mathbb{R}$ such that
\begin{equation}\label{eq:phimodel}
    J = J' + i \, \partial \opar \phi
\end{equation}
is Ricci-flat. We use the \verb|cymetric| library\footnote{\url{https://github.com/pythoncymetric/cymetric}} \cite{Larfors:2021pbb,Larfors:2022nep} for learning said metric, including the $\phi$-\textit{model}, which encodes eq. (\ref{eq:phimodel}).

\subsection{The $\phi$-model of the cymetric package}\label{sec:phimodel}

The \cymetric package consists of point generators, that create samples of points on the CY manifold, and a number of machine learning models that take the coordinates of the sampled points as input, and gives some data related to the Ricci-flat CY metric as output. In the $\phi$-model, the output is a numerical estimate of the real function $\phi$, defined in eq.~\eqref{eq:phimodel}.

Since data pre-processing is a theme of this paper, we need to carefully describe the input data of the $\phi$-model. As mentioned, this data represents points on the CY manifold, as constructed by the point  generators of the \cymetric package; these routines are discussed in depth in ref.~\cite{Larfors:2022nep}.  For simplicity, we continue to focus on the case where $X$ is a hypersurface in $\CP^{n+1}$. Points in the CY $n$-fold $X$ are represented by homogeneous coordinates, $z$, of the ambient space 
$\mathbb{CP}^{n+1}$,  and are repackaged, by the point generators,  as a real tuple
\[
\left(\Re(z_0) \quad \dots \quad \Re(z_{n+1}) \quad \Im(z_0) \quad \dots \quad \Im(z_{n+1}) \right) \in \mathbb{R}^{2(n+2)} \, ,
\]
where the homogeneous scaling invariance of $\mathbb{CP}^{n+1}$ has been used to  ensure that the input data take values in the range $[0,1]$. In accordance with the philosophy of this paper, we will still refer to the network as a map $f:X\to \mathbb{R}$.

In order to train the neural network, we rely on loss functions encoding the relevant mathematical constraints. In the \cymetric package, these constraints are encoded in five custom loss functions. First, we want to ensure Ricci-flatness. As mentioned in the introduction, this can be phrased as requiring $J$ to obey the {Monge--Amp\`ere equation}, which in 3 dimensions reads
\begin{equation}
    J \wedge J \wedge J = \kappa \, \Omega \wedge \overline{\Omega} \,,
\end{equation}
where $\kappa$ is some complex constant and $\Omega$ is the no-where vanishing holomorphic $(3,0)$-from.\footnote{Here $J$ is predicted by the network and $\kappa$ and $\Omega$ are computed by the point generator.} With this at hand, the first loss function is defined as
\begin{equation}
    \mathcal{L}_{\textrm{MA}} = \left\lVert 1 - \frac{1}{\kappa} \frac{J \wedge J \wedge J}{\Omega \wedge \overline{\Omega}} \right\rVert_1 \,,
\end{equation}
where $\lVert \cdot \rVert_1$ denotes the $L_1$ norm.\footnote{The \cymetric package can use any $L_n$ norm for the loss functions, we use the default setting of the package.}

Furthermore, we want to ensure that $\phi$ is a global function. Hence we define the transition loss
\begin{equation}
    \mathcal{L}_{\textrm{transition}} = \frac{1}{d}  \sum_{(s,t)} \left\lVert \phi^{(t)} - \phi^{(s)} \right\rVert_1  \,,
\end{equation}
where $s,t$ denote different patches and $d$ denotes the number of patch transitions.

In the \cymetric package, three more constraints are encoded as custom loss functions $\mathcal{L}_i$: the K\"ahler constraint $\d J = 0$ gives $\cL_{\text{dJ}}$, preserving the K\"ahler class is enforced by $\cL_{\text{Kclass}}$\footnote{The  K\"ahler class loss function is necessary when learning the metric on any CY manifold with $h^{(1,1)}>1$. It reduces to a volume-preserving constraint when  $h^{(1,1)}=1$. The  \cymetric routines require such a loss since they can be applied to any CICY manifold, and CY manifolds defined as hypersurfaces in the toric ambient spaces contained in the Kreuzer--Skarke list \cite{Kreuzer:2000xy}.} and the vanishing Ricci-scalar is given by $\cL_{\text{Ricci}}$. The total loss function is thus given by
	\begin{align}
		\label{eq:loss}
		\cL &= \alpha_1 \cL_{\text{MA}} + \alpha_2 \cL_{\text{dJ}} + \alpha_3 \cL_{\text{transition}} + \alpha_4 \cL_{\text{Ricci}} + \alpha_5 \cL_{\text{Kclass}} \; .
	\end{align}
 \mo{}
Here, $\alpha_i$ are tuneable hyperparameters of the model that are set by the user. In our experiments, we use the default values for the \cymetric losses in the $\phi$-model; thus the K\"ahler and Ricci losses are disabled and $\alpha_i=1$.  We refer the reader to \cite{Larfors:2021pbb,Larfors:2022nep} for more discussions of the loss functions.

\subsection{Spectral layers}
\label{sec:spectral}

The spectral network, introduced in \cite{Berglund:2022gvm}, repackages the input data of the $\phi$-model of the \cymetric package in a form that is manifestly invariant under homogeneous rescalings of the ambient space coordinates.  The motivation for this feature engineering step is to ensure that the learned function $\phi$ is globally well-defined. Recall that this is only enforced "softly", by a loss function, in the \cymetric package. As a consequence, the network may predict line bundle sections rather than globally defined functions. An observed ill consequence of this is the discrepancy of topological quantities computed with the predicted Ricci-flat metric \cite{Berglund:2022gvm}, which becomes particularly noticeable for singular, or nearly singular, CY spaces. 

It may nevertheless appear surprising that such feature engineering is needed. The homogeneous rescaling is not a symmetry of the CY manifold; this is a complex, $n$-dimensional space with $n$ holomorphic coordinates. However,  as we have described in the previous section, the input data of the \cymetric models is not an $n$-dimensional complex tuple. 
Rather, for a CICY with ambient space $\CP^{n_1}\times..\times \CP^{n_k}$, the point generators of  \cymetric produces $\sum_i 2(n_i+2) $-dimensional real input data tuples  (with entries in the range $[0,1]$).  In this pre-processing the points on the CY space are separated into coordinate patches (that overlap but for zero measure sets).  The training is then done locally, using the transition loss function to impose that the function matches on patch overlaps. It is an empirical fact that the transition loss function is around $10^{-3}$ for standard $\phi$-models (trained for 10 epochs on the Fermat quintic). As we will see in section \ref{sec:exp}, with the addition of a spectral layer in the network, the transition loss function is of the order $10^{-9}$ already at initialization of the network, which indicates the prediction of $\phi$ is indeed a function.

The effect of the spectral layer is thus to impose invariance under homogeneous rescalings. Following ref.~\cite{Berglund:2022gvm} (see also \cite{Ashmore:2021qdf}), we have encoded the layer as follows.\footnote{This encoding shows the relation between the spectral layers and the  bihomogeneous layers of \cite{Douglas:2020hpv}; they are essentially related by a division by the norm. However, the bihomogeneous layers have trainable weights, whereas the spectral layer is non-trainable.}  For a CY with ambient space $\mathbb{CP}^n$, the spectral layer is a map 
\[
(z_0 , \hdots, z_n) \mapsto \begin{pmatrix}
    \frac{z_0 \bar{z}_0}{|z|^2} &  \frac{z_0 \bar{z}_1}{|z|^2} & \hdots &\frac{z_0 \bar{z}_n}{|z|^2}\\
    \frac{z_1 \bar{z}_0}{|z|^2} &  \frac{z_1 \bar{z}_1}{|z|^2} & \hdots &\frac{z_1 \bar{z}_n}{|z|^2}\\
    \vdots & \vdots & \ddots & \vdots\\
    \frac{z_n \bar{z}_0}{|z|^2} &  \frac{z_n \bar{z}_1}{|z|^2} & \hdots &\frac{z_n \bar{z}_n}{|z|^2}\\
\end{pmatrix}
\]
where $|z|^2 = \sum_{i=0}^n z_i \bar{z}_i$. This is clearly invariant under 
\[
z \to \lambda z, \lambda \in \C^\star \; .
\]
The entries in the spectral layer matrix corresponds to the degree-1 basis of eigenfunctions of the Laplacian on $\CP^4$ introduced in \cite{Ashmore:2021qdf}; this is the rationale for the name of the layer. Comparing to the machine learning literature, this is almost, up to the division by the norm, an instance of the polynomial invariant layers studied by \cite{yarotsky:2022}.

For CICY manifolds with product ambient spaces $\CP^{n_1} \times \CP^{n_2} \times ... \CP^{n_k}$ the spectral layer is implemented in a block diagonal fashion,\footnote{Note, that these matrices are never explicitly constructed in our implementation. Instead, the input is directly computed as a flattened vector, where the zero entries are discarded, that can be fed into the neural network.} with separate maps for the homogeneous coordinates associated to each $\CP^{n_j}$. 

As it is evident from the formulae above, our spectral layer uses complex inputs. To match the real input of the \cymetric package, we use a conversion between the real inputs  to complex and then back to real.
Whenever we refer to the spectral layer below, we refer to this composition. 

Note that there are no trainable parameters in spectral layers. Their only effect is to map the input data to a convenient, canonical form, that is invariant under homogeneous rescalings. Since the composition of an invariant map with any other map, results in a map with the same invariance, the inclusion of a spectral layer results in an invariant machine learning model, as illustrated in figure \ref{fig:invNN}.

In the next section, we will explore other types of invariant layers, that serve the purpose of making the network prediction $\phi$ invariant under isometries of the CY manifold.

\begin{figure}
    \centering
    \includegraphics[width=0.7\textwidth]{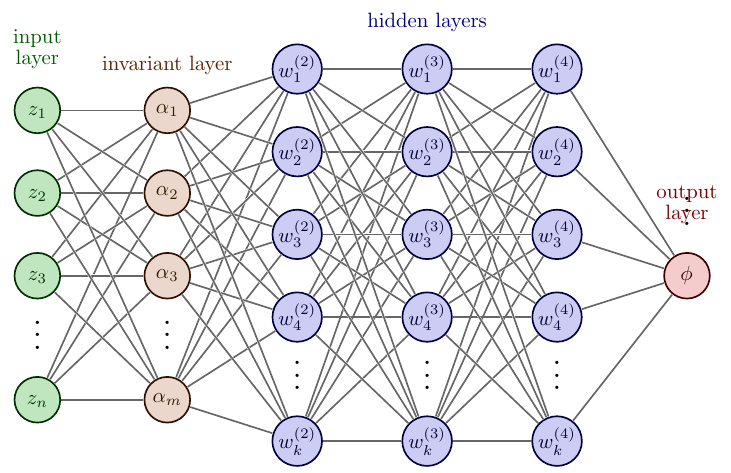}
    \caption{A dense neural network with an invariant layer. The invariant layer does not have trainable parameters. Its sole purpose is to put the input data, $z_i$, in a  form, $\alpha_i(z)$ that is invariant under the relevant symmetry group.}
    \label{fig:invNN}
\end{figure}

\section{Invariant metric by fundamental domain projections}
\label{sec:fundproj}

In the following, we will discuss a method to construct a group invariant neural network predicting the Ricci-flat metric on some CY manifold. This will be done by projecting the input data, i.e. the tuple containing the homogeneous coordinates of the sampled points on the CY space, onto the fundamental domain of said group, following the strategy of \cite{aslan2022group,kaba2023equivariance}.   
We recall that the rationale of these methods is based on the fact that the composition of a group invariant map with any other map, results in a map with the same invariance. Thus, by including a non-trainable, $G$-canonicalization layer, as introduced below, before the dense layers of the $\phi$-model, we get a group invariant network. An illustration of this model architecture is given in figure \ref{fig:invNN}.

We consider a neural network $f:X \to \mathbb{R}$ for some CY manifold $X$. Given some function $h:X\to X$, satisfying
\begin{equation}
    h\left( g \cdot x \right) = h(x) \,, \qquad \forall x \in X \,, \; \forall g \in G \, , 
\end{equation}
for a group $G$ acting on $X$, one can construct a $G$-invariant neural network by
\begin{equation}
    f\left( h\left( x \right) \right) = y \,, \qquad x \in X, \, y \in \mathbb{R}\, .
\end{equation}

 This can be achieved by defining a map $\pi:X\to G$, obeying 
 \begin{equation} \label{eq:equivcondition}
\pi(g\cdot x) = \pi(x)g^{-1} \,, \qquad \forall x \in X \,, \; \forall g \in G\,,
 \end{equation}
 such that
\begin{equation}
    h(x) =  \pi(x) \cdot x \,.
\end{equation}
We call the map $h$ a \emph{ $G$-canonicalization} and  it is easy to see that $h$ is a $G$-invariant map
\begin{equation}
    h(g\cdot x) = \pi(g\cdot x) \cdot (g\cdot x) = \pi(x) (g^{-1} g) \cdot x = \pi(x) \cdot x = h(x) \,.
\end{equation}
The remaining task is to define an appropriate map that obeys eq.~(\ref{eq:equivcondition}). We will do so by defining maps to a fundamental domain of $G$.

\begin{definition}[Fundamental domain]
Let $G$ be a group acting on a topological $G$-set $X$. Then a closed subset $\mathcal{F}$ of $X$ is called a fundamental domain of $G$ in $X$ if $X$ is the union of all conjugates of $\mathcal{F}$, i.e.,
\begin{equation}
    X = \bigcup_{g\in G} g \, \mathcal{F} \,,
\end{equation}
and the intersection of any two conjugates has no interior.
\end{definition}

With this definition at hand, we define $h:X\to \mathcal{F}$, such that $h(g\cdot x)=h(x)$ which defines $\pi$ with respect to a choice of $\mathcal{F}$.

It is important to note that the projection into a fundamental domain can give rise to discontinuities. For a given fundamental domain $\mathcal{F}$ of some group $G$ acting on some manifold $X$, there exist points on $X$ which will get mapped to the boundary of $\mathcal{F}$. Moving infinitesimally within $X$ can then result in discontinuous jumps in $\mathcal{F}$. For the metric approximations we construct, this can lead to problems computing global properties, like the volume which require integration over the entire manifold. We will talk about this further in section \ref{sec:exp} and the conclusion.

\subsection{Examples}
In the following examples, we will consider symmetries on the Fermat quintic, whose defining equation was given in eq.~\eqref{eq:fermquin}. Let $f\circ h: X \to \mathbb{R}$ 
be a neural network learning the function $\phi(z)$ in the Ricci-flat metric $g_{\textrm{CY}}(z)= g_{\textrm{FS}}(z)+ \partial\overline{\partial}\phi(z)$. Let $\mathbf{z} \in \mathbb{C}^5$ encode $z$ as a vector and let $\rho:G \to \textrm{GL}(5,\mathbb{C})$ be a representation of some symmetry group $G$. A $G$-invariant network is then constructed by defining $h(\mathbf{z}) = \rho( \pi(z) )\mathbf{z}$ together with an appropriate choice of $\pi(z)$. In the following, we will construct examples considering several different symmetry groups $G$.
In each example, we specify explicitly the action of $G$ on the ambient space, the fundamental domain $\mathcal{F}$ and the projection map $h$ on $\mathcal{F}$.

\subsubsection{Homogeneous rescalings}
\label{sec:hom}
Consider the set of homogeneous rescalings of the homogeneous coordinates of the embedding space of the quintic, $\mathbb{CP}^4$:
\[
[z_0: z_1 : z_2 : z_3 : z_4] \sim \lambda [z_0: z_1 : z_2 : z_3 : z_4] \mbox{, for } \lambda \in \C^\star \; .
\]
The group $G=\mathbb{C}^{\star}$ is a symmetry 
of the input data 
\[
\left(\Re(z_0) \quad \dots \quad \Re(z_4) \quad \Im(z_0) \quad \dots \quad \Im(z_4) \right)\in \mathbb{R}^{10}
\]
of the $\phi$-model. 

We define the fundamental domain
\begin{equation}\label{eq:homCanon}
    \mathcal{F} = \Big\{[z_0:z_1:z_2:z_3:z_4] \in \mathbb{CP}^4 \, \Big| \,\forall i : \, |z_i| \leq 1, \exists i : z_i = 1   \Big\} \, , 
\end{equation}
and the projection map $\hcan$ to be 
\[\numberthis \label{eq:homcan_map}
\hcan: \; [z_0:z_1:z_2:z_3:z_4] \mapsto \left[\frac{z_0}{z_i}: \frac{z_1}{z_i} : \frac{z_2}{z_i} : \frac{z_3}{z_i} : \frac{z_4}{z_i}\right], \forall j\neq i:\ |z_i| \geq |z_j| \, .
\]
In case more than one coordinate has the biggest norm, we choose to divide by the coordinate with smallest index.\footnote{This is straightforwardly implemented in \verb|TensorFlow| by finding the largest coordinate with \linebreak \verb|tf.math.argmax(tf.abs(complex_input))| and then dividing the input by said coordinate.}

Eq.~(\ref{eq:homCanon}) thus uniquely determines the canonicalization maps $h$ and $\pi$ defined in the beginning of this section. 
The action transforms any input into the canonical form of $\max(|z_0|, \dots , |z_4|)=1$. Note that the \verb|cymetric| input of the point generator is already of this canonical form, hence the layer has no effect. However, upon training, points are re-scaled within the transition loss to 
{compare the prediction of $\phi$ in}
different patches. This re-scaling is inverted by $\hcan(\mathbf{z})$, leading to an almost vanishing transition loss. 

As an example of the action of this map, consider the point 
\[
z = [z_0: z_1 : z_2 : z_3 : z_4]  \mbox{, where } \forall j\neq 3: |z_3| > |z_j| \,  .
\]
Then $\pi(z)=1/z_3$ and in terms $\mathbf{z} \in \mathbb{C}^5$ we have
\begin{equation}
    \hcan(\mathbf{z}) = \rho (\pi ( z ) ) \cdot \mathbf{z} = \begin{pmatrix}
        1/z_3 & 0 & 0 & 0 & 0 \\
        0 & 1/z_3 & 0 & 0 & 0 \\
        0 & 0 & 1/z_3 & 0 & 0 \\
        0 & 0 & 0 & 1/z_3 & 0 \\
        0 & 0 & 0 & 0 & 1/z_3 
    \end{pmatrix}  \begin{pmatrix}
        z_0 \\
        z_1 \\
        z_2 \\
        z_3 \\
        z_4 
    \end{pmatrix} = \begin{pmatrix}
        z_0/z_3 \\
        z_1/z_3 \\
        z_2/z_3 \\
        1 \\
        z_4/z_3 
    \end{pmatrix} \,.
\end{equation}

When the ambient space is a product of complex projective spaces, we apply homogeneous rescaling as described above for the coordinates of each factor, separately.

\subsubsection{Permutations}
\label{sec:perm}
Let $G=S_5$, which acts by permutation of variables. Clearly, the action of this group allows to reorder the entries of a tuple $(z_0, ..., z_n)$ in any way we desire, and we may select a fundamental domain by any consistent order. Thus, given a set of points $[z_0: z_1 : z_2 : z_3 : z_4]$, we define the fundamental domain
\begin{equation} \label{eq:perm_map}
    \mathcal{F} = \Big\{ [z_0: z_1 : z_2 : z_3 : z_4] \in \mathbb{CP}^4 \, \Big| \, |z_0| \geq  |z_1| \geq |z_2| \geq |z_3| \geq |z_4| \Big\} \, .
\end{equation}
The projection map $\hperm$ thus performs the required sorting.\footnote{Again, this is straightforwardly implemented in \verb|TensorFlow| by obtaining the absolute values of the input with \verb|norms=tf.math.abs(complex_inputs)| and then sorting by 
\verb|tf.argsort(norms, axis=-1, direction='DESCENDING')|.
} By default, if the norms are equal, $\hperm$ keeps the original order.

As an example, consider the point $z= [z_0:z_1:z_2:z_3:z_4]$ where $|z_1|>|z_0|>|z_2|>|z_3|>|z_4|$, then
\begin{equation}
    \pi(z) = \begin{pmatrix}
        0 & 1 & 2 & 3 & 4 \\
        1 & 0 & 2 & 3 & 4 
    \end{pmatrix} \,,
\end{equation}
where we have used \textit{Cauchy's two line notation} for $\pi(z) \in S_5$, where the first line corresponds to indices $i$ and the second line to their permutations $\sigma(i)$. In terms of $\mathbf{z} \in \mathbb{C}^5$, we have
\begin{equation}
    \hperm(\mathbf{z}) = \rho (\pi ( z ) ) \cdot \mathbf{z} = \begin{pmatrix}
        0 & 1 & 0 & 0 & 0 \\
        1 & 0 & 0 & 0 & 0 \\
        0 & 0 & 1 & 0 & 0 \\
        0 & 0 & 0 & 1 & 0 \\
        0 & 0 & 0 & 0 & 1 
    \end{pmatrix}  \begin{pmatrix}
        z_0 \\
        z_1 \\
        z_2 \\
        z_3 \\
        z_4 
    \end{pmatrix} = \begin{pmatrix}
        z_1 \\
        z_0 \\
        z_2 \\
        z_3 \\
        z_4 
    \end{pmatrix} \,.
\end{equation}

In the case of product ambient spaces, there may be multiple permutation invariances. Below, we will explore CY manifolds whose ambient space is $(\CP^n)^m$,  
with defining polynomials that are invariant under an
$S_n$ that acts identically on each factor. Concretely, the fundamental domain of such a  diagonal $S_n$ acting on $\CP^n \times \CP^n$   is defined to be 
\begin{equation}\label{eq:FDtwoambient}
    \mathcal{F} = \Big\{ ([z_0: z_1 :\hdots: z_n],[w_0: w_1 :\hdots: w_n] ) \in \mathbb{CP}^n\times \CP^n \, \Big| \, |z_0| \geq  |z_1| \geq \hdots \geq |z_n| \Big\} \, 
\end{equation}
where the restriction is only enforced on the coordinates of the first component.
To project on $\mathcal{F}$, we order the coordinates in the first factor by norms and then use the same ordering to order the rest coordinates of the remaining factors. 
For example, the point $([z_0: z_1 :\hdots: z_n],[w_0: w_1 :\hdots: w_n] ) \in \mathbb{CP}^n\times \CP^n$, such that $|z_1|\geq |z_0| \geq |z_2| \geq |z_3| \geq \hdots \geq|z_n|$ gets mapped under the projection to $([z_1: z_0 :z_2: z_3: \hdots: z_n],[w_1: w_0: w_2 :w_3 :\hdots: w_n] )$. While we do not foresee any large complications, we leave the adaptation of the fundamental domain projections to  ambient spaces of the type $\CP^{m} \times \CP^{n}$ with $m\neq n$, and similar generalizations, to future work.  

\subsubsection{Scaling by roots of unity}
\label{sec:root}

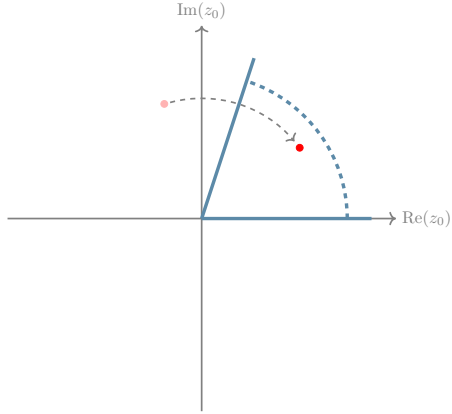
\begin{figure}[!t]
  \centering
  \resizebox{0.4\textwidth}{!}{
  \begin{tikzpicture}
  
  \draw[->, gray, line width=1pt] (3.9,0) -- (4,0) node[anchor=west]{Re($z_0$)};
  \draw[->, gray, line width=1pt] (0,3.9) -- (0,4) node[anchor=south]{Im($z_0$)};
  \draw[gray, line width=1pt] (-4,0) -- (4,0);
  \draw[gray, line width=1pt] (0,-4) -- (0,4);

  \draw[airforceblue, line width=2pt] (0,0) -- (1.08, 3.33);
  \draw[airforceblue, line width=2pt] (0,0) -- (3.5, 0);
  \draw[dashed, airforceblue, line width=2pt] (3,0) arc (0:72:3);
    \filldraw[red, opacity=0.3] (-0.77,2.38) circle (2pt);
    \filldraw[red] (2.02, 1.47) circle (2pt);
    \draw[->, dashed, gray, line width=1pt] (-0.65,2.41) arc(105:40:2.5);

  \end{tikzpicture}}
  \caption{Under scaling by a fifth root of unity, $z_0$ maps into the fundamental domain $\mathcal{F}$ (first quintant) by an appropriate rotation.\label{fig:rootscaling}}
  \end{figure}

Let $G = (\mathbb{Z}_5)^5$ act
 by re-scaling each coordinate by a fifth root of unity, e.g. $z_i \to \exp(2\pi i k/5)z_i$ for some $k\in \{0, \dots, 4\}$. Note, that different coordinates can be re-scaled by different roots simultaneously. We define the fundamental domain 
{\[
\mathcal{F} := \{[z_0:\hdots:z_4] \in \mathbb{CP}\ |\ \forall i: 0\leq arg(z_i)\leq 2\pi/5\} \, ,
\]
and the projection map $\hrs: X \to \mathcal{F}$ is given by\footnote{We implement this projection in \verb|TensorFlow| by finding the angles with \linebreak \verb|angles=tf.math.angle(complex_input)| and then applying the modulo operation with \\
\verb|tf.math.floormod(angles, 2*np.pi/5.)|. }
\[\numberthis \label{eq:rootscalingcan_map}
\hrs : \; [r_0 e^{i\theta_0}: \hdots: r_4 e^{i\theta_4}] \to [r_0\exp[i(\theta_0 \mod 2\pi/5)], \hdots, r_4\exp[i(\theta_4 \mod 2\pi/5)]] \, .
\]
}

As an example, consider the point $z = [r_0 e^{2\pi i \varphi_0}:r_1 e^{2\pi i \varphi_1}:r_2 e^{2\pi i \varphi_2}:r_3 e^{2\pi i \varphi_3}:r_4 e^{2\pi i \varphi_4}]$ where $\varphi_i < 1/5 \; \forall i\neq0$ and $\varphi_0=3/10$. We have for  $\mathbf{z}\in \mathbb{C}^5$ 
\begin{equation}
    \hrs(\mathbf{z}) = \rho (\pi ( z ) )  \mathbf{z} = \begin{pmatrix}
        \exp(2\pi i 4/5) & 0 & 0 & 0 & 0 \\
        0 & 1 & 0 & 0 & 0 \\
        0 & 0 & 1 & 0 & 0 \\
        0 & 0 & 0 & 1 & 0 \\
        0 & 0 & 0 & 0 & 1 
    \end{pmatrix}  \begin{pmatrix}
        r_0 e^{2\pi i 3/10} \\
        r_1 e^{2\pi i \varphi_1} \\
        r_2 e^{2\pi i \varphi_2} \\
        r_3 e^{2\pi i \varphi_3} \\
        r_4 e^{2\pi i \varphi_4} 
    \end{pmatrix} = \begin{pmatrix}
        r_0 e^{2\pi i 1/10} \\
        r_1 e^{2\pi i \varphi_1} \\
        r_2 e^{2\pi i \varphi_2} \\
        r_3 e^{2\pi i \varphi_3} \\
        r_4 e^{2\pi i \varphi_4} 
    \end{pmatrix} \,.
\end{equation}
We illustrate the projection in figure~\ref{fig:rootscaling}.

\subsection{Combining invariant layers}

To achieve metric predictions that are invariant under a (semi)-direct product of groups, as is the case for the Fermat quintic, we would like to construct ML models that are invariant under several groups. This can be attained  by adding the $G$-canonicalization layers in sequence to the network.  However,   we must then make sure that the canonicalizations satisfy a non-trivial condition:
Let $G = G_1 \ltimes G_2$ act on a space $X$ and let $h_i$ be a $G_i$-canonicalization for $i=1, 2$. Then $h:= h_2 \circ h_1$ defines a $G$-canonicalization if and only if $h_1$ descends to a well defined map on the orbits of $G_2$, namely $h_1 : X/G_2 \to X/G_2$ is well defined.
Note that this condition  on $h_1$ is weaker than $h_1$ being equivariant with respect to $G_2$.

To make this concrete, let's return to the metric on the Fermat quintic $X$ defined by \eqref{eq:fermquin}, which is invariant under action of the group $G = S_5 \ltimes \mathbb{Z}_5^5$. For later reference, we note that the action of this group is not free. Let $\hperm$  be the $S_5$-canonicalization defined by \eqref{eq:perm_map}, and let $\hrs$ be a $\mathbb{Z}_5^5$-canonicalization as defined in \eqref{eq:rootscalingcan_map}.
Then $\hrs \circ \hperm$ is a $G$-canonicalization, since  for any $r \in \mathbb{Z}_5^5$ and $z \in X$ we have
\[
\begin{split}
    \hperm(r\cdot z) &= \piperm(r\cdot z) \cdot (r \cdot z)
    =\piperm(z) \cdot (r \cdot z)\\
    &=  \piperm(z) r \piperm(z)^{-1} \piperm(z) \cdot z = \underbrace{\piperm(z) r \piperm(z)^{-1}}_{\in \mathbb{Z}_5^5} \cdot \hperm (z).
\end{split}
\]
Notice that $\piperm(r \cdot z) = \piperm(z) $ because $\piperm$ orders the entries in the $z$-tuple according to their norms and $r$ acting on $z$ only changes their arguments.

Performing a similar analysis, we find that $\hrs \circ \hperm \circ \hcan$ is a $[(\C^\star \times S_5) \ltimes \mathbb{Z}_5^5]$-canonicalization, where $\hcan$ is given by \eqref{eq:homcan_map}, and $\text{Spectral} \circ \hperm$ is invariant under $S_5 \times \C^\star$.

\section{Experiments}
\label{sec:exp}
We have performed a number of experiments, on different CICY geometries, to compare the accuracy and performance of invariant $\phi$-models, obtained through the canonicalization procedure described in section \ref{sec:fundproj}, with the standard and spectral $\phi$-models. In this section, we present the results of these experiments. We use the pointgenerator classes of the \cymetric package to generate samples of 100~000 points on the CY spaces.\footnote{For CY hypersurfaces, we use \texttt{PointGenerator}, for CICYs with product ambient space we use \texttt{CICYPointGenerator}.} The data sets are split (1:9) into validation and training data sets. 
The experiments consist of runs over five different point sets, and we report the average and the standard deviation of these runs. 
We test the models on an unseen test set of 50~000 points.

In more detail, we set up the standard, dense $\phi$-model with 3 hidden layers with 64 nodes each. The network is fully connected, has GELU activation function, Adam optimizer with 0.001 learning rate, and the weights of the custom losses of the $\cymetric$ package take default values. We use the default (\texttt{glorot\_uniform}) initialization of \texttt{TensorFlow} for the trainable weights of the neural networks. These hyperparameters were chosen after an initial optimization study on the standard $\phi$-model, and were kept constant across the invariant networks. The rationale for this choice was to isolate the effects of the invariant layers from effects of hyperparameter tuning. We leave a more extensive optimization study of the hyperparameters of invariant architectures for the future.

\subsection{Fermat quintic}
\label{sec:fermatexp}

Working on the Fermat quintic, we compare the  standard $\phi$-model  with networks where one, or several, non-trainable layers  are added.  
The added non-trainable layers correspond to the spectral layer which, as described in section \ref{sec:spectral}, maps the data to a form which is invariant under homogeneous rescalings. We also explore $G$-canonicalization layers that enforce invariance under homogeneous rescalings, permutations and scaling by roots of unity, as described in \ref{sec:hom}-\ref{sec:root}. 

\begin{figure}
    \centering
    \hspace*{-1.5cm}
\includegraphics[width=1.2\textwidth]{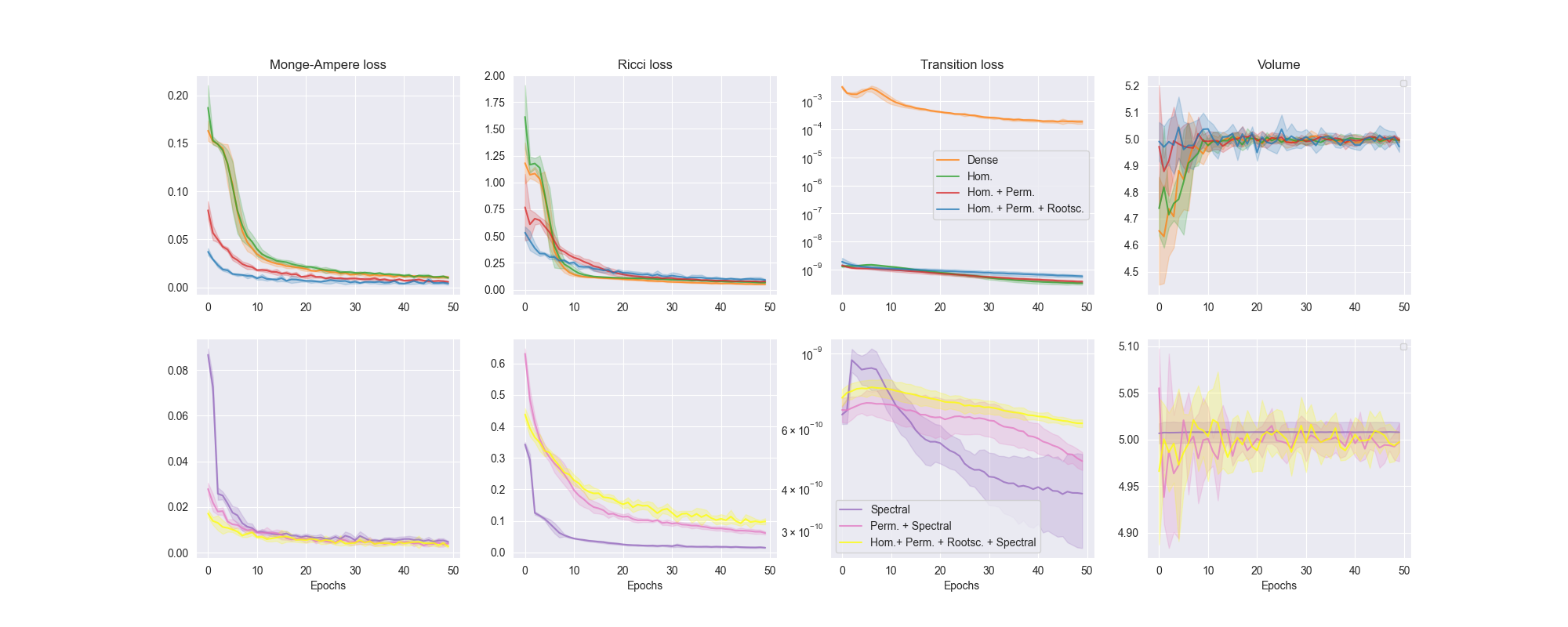}
    \caption{ Performance of different ML models on the Fermat quintic. From left to right: Monge--Amp\`ere, Ricci, and transition loss evolution on  validation data; volume computed with trained metric on all data (ground truth is 5). The upper (lower) row shows dense (spectral) networks with and without G-canonicalization layers. Solid line is average over 5 experiments, lighter color bands show the standard deviation.}
    \label{fig:fermat-5datasets}
\end{figure}

We have performed a number of experiments on this geometry, and report the results from training the models on 90~000 points, over 50 epochs  in figure~\ref{fig:fermat-5datasets} (we have checked that the conclusions are reproduced in 300-epochs experiments). Since training and validation plots look very similar, we only report the latter.\footnote{In more detail: the MA and transition losses on training data look very similar to the shown validation plots, showing that there is no problem with overfitting. The Kclass loss evolution is consistent with the Volume plot on validation data. K\"ahler and Ricci loss are disabled during training.} Thus, figure~\ref{fig:fermat-5datasets} shows the evolution, over the training epochs, of Monge-Amp\`{e}re, Ricci and transition loss on a validation set consisting of 10~000 points (the K\"ahler loss evaluates to $\mathcal{O}(10^{-16})$, and is omitted from the plot). We also track the evolution of the volume, as computed by the integral of the volume measure of the predicted metric (cf. appendix \ref{ap:integration}). Since the computation of the volume requires an integral, it is performed over the full data set, i.e. 100~0000 points. 

As is clear from the MA loss (leftmost in figure \ref{fig:fermat-5datasets}), the spectral layer and canonicalization layers all have a positive impact on the loss. This is clearly visible at initialization of the network. After training, the effect of the invariant layers on the MA loss is still positive, albeit less pronounced. This is in accord with previous observations in the literature indicating that the standard $\phi$-model is fairly good at learning symmetries.

When examining the Ricci loss, we notice a number of interesting features. First, the spectral layer has a larger positive impact on performance, both at initialization and during training. {Second, while the canonicalization layers may have a positive impact on the Ricci loss at initialization of standard $\phi$-models, the effect becomes negative after 10 epochs of training. While this discrepancy decreases with training, the standard $\phi$-model has a slightly lower Ricci loss on validation than its canonicalized counterparts. For spectral models, adding canonicalized layers has a negative effect on the Ricci loss. We interpret this as an effect of a complicated loss landscape where optimizing with respect to the  MA loss does not necessarily improve the Ricci loss (despite the fact that both of these losses are zero for the true Ricci-flat metric). Similar effects have been observed when learning CY metrics on more complicated CICY manifolds \cite{Schneider:2022ssn}, and may here be related to the fact that the symmetry group has fixed points, that may affect the computation of Ricci loss. As suggested in   ref.~\cite{Schneider:2022ssn}, by making a different hyperparameter choice, so that both MA and Ricci losses are used to guide the training of the ML model, performance may be improved; we leave such studies for the future. } 

The third subplot in figure \ref{fig:fermat-5datasets} shows the transition loss: this is $\mathcal{O}(10^{-3})$ for the dense network, and 6 orders of magnitude lower for the spectral network, and any network that includes a homogeneous canonicalization layer. This is in accordance with expectations, and indicates that $\mathbb{C}^*$-invariance is achieved.

Finally, the volumes computed with the trained models all converge to the ground truth upon training (within numerical errors). The spectral layer gives a constant volume under training, which shows  that this model indeed predicts a globally defined $\phi$ at any time during training. For the other architectures, there is more variance, so the same claim cannot be made. The standard, dense $\phi$-model exhibits most variation, and imposing $G$-canonicalization has a positive effect on the constancy of the volume. 

In table \ref{tab:fermat}, we list the $\sigma$ and $\mathcal{R}$ measures, which correspond to the integrated MA and Ricci loss respectively (see appendix \ref{ap:measures}), for the fully trained models as well as the measures for the Fubini-Study metric as a benchmark. We can see that adding the canonicalization layers has a positive effect on the $\sigma$ measure. Concatenating the homogeneous, permutation and rootscaling canonicalization layers improves the $\sigma$ measure by two orders of magnitude compared to the Fubini-Study metric, and by a factor of 2 compared to the standard, dense $\phi$-model. Similar improvement is observed for the spectral layer. For comparison, the $\sigma$ measure of the best performing experiments in table \ref{tab:fermat} is $0.0027$, which would correspond to  $k\approx 33$ with Donaldson's algorithm, and $k\approx 5$ with Headrick-Nassar's functional minimization.\footnote{This comparison uses the observations of $\sigma$ measures for $k\in[1,12]$, which shows that the $\sigma$ measure decreases polynomially, like $\sigma \sim 3.1 k^{-2}-4.2 k^{-3}$, for Donaldson's balanced metric \cite{Douglas:2006rr}, and exponentially, like $\sigma \sim 0.17 \times 2.2^{-k}$,  for Headrick-Nassar's optimal metric \cite{Headrick:2009jz}. } 
On the $\mathcal{R}$ measure, the spectral network performs the best. Any addition of canonicalization layers worsens the accuracy of the prediction, in particular when combined with the spectral layer. This is in accordance with the observed Ricci loss values on validation data, as discussed above.

\begin{table}[]
    \centering
    \begin{tabular}{|r|c|c|}
    \hline
       \textbf{Model}  & \textbf{$\sigma$ measure} & \textbf{$\mathcal{R}$ measure} \\
         \hline
       FS metric  & 0.3720 & 0.6919  \\
       Dense  & 0.0106 $\pm$ 0.0011 & 0.0502 $\pm$ 0.0040  \\
       Hom.  & 0.0100 $\pm$ 0.0006 & 0.0645 $\pm$ 0.0052  \\
       Hom. + Perm.  & 0.0058 $\pm$ 0.0007 & 0.0754 $\pm$ 0.0066  \\
       Hom. + Perm. + Rootsc.  & 0.0040 $\pm$ 0.0012 & 0.0891 $\pm$ 0.0084  \\
       Spectral  & 0.0047 $\pm$ 0.0004 &  0.0149 $\pm$ 0.0011 \\
       Perm. + Spectral  & 0.0040 $\pm$  0.0004 & 0.0600 $\pm$ 0.0049 \\
       Hom. + Perm. + Rootsc. + Spectral  & 0.0027 $\pm$  0.0005 & 0.0956 $\pm$ 0.0091 \\
       \hline
    \end{tabular}
    \caption{$\sigma$ and $\mathcal{R}$ measure for the Fermat quintic evaluated over a test set of 50~000 points. The values are rounded to four decimals and the error is given by the standard deviation of five experiments.}
    \label{tab:fermat}
\end{table}

\subsection{Tetraquadric}

In this example, we study the tetraquadric  defined by the configuration matrix 
\begin{align}
    \label{eq:bicubic}
    \left[
        \begin{array}{c|c}
        1 & 2 \\
        1 & 2 \\
        1 & 2 \\
        1 & 2
        \end{array}
    \right]
\end{align}
This CY 3-fold has $h^{(1,1)}=4$ and $h^{(2,1)}=68$. 
We pick a smooth member of this CICY family, defined by a degree (2,2,2,2)-polynomial $p$ of $\mathcal{M}$ which is invariant under the action of $S_2$, as defined in §\ref{sec:perm}. Concretely, we set
\begin{align*}
&p(x_0, x_1, y_0, y_1, z_0, z_1, w_0, w_1) :=\\
&2x_0^2 y_0^2 z_0^2 w_0^2 - 2x_0^2 y_0^2 z_0^2 w_0 w_1 + 2x_0 x_1 y_0^2 z_0^2 w_1^2 + x_1^2 y_0^2 z_0^2 w_0^2 + 2x_1^2 y_0^2 z_0^2 w_0 w_1 \\
&+ x_0^2 y_0 y_1 z_0^2 w_0^2 + x_0^2 y_0 y_1 z_0^2 w_1^2 - 2x_0 x_1 y_0 y_1 z_0^2 w_0 w_1 - x_0 x_1 y_0 y_1 z_0^2 w_1^2 \\
&+ x_1^2 y_0 y_1 z_0^2 w_0^2 - x_1^2 y_0 y_1 z_0^2 w_1^2 - 2x_0^2 y_1^2 z_0^2 w_0^2 - x_0^2 y_1^2 z_0^2 w_0 w_1 - 2x_0^2 y_1^2 z_0^2 w_1^2 \\
&- 2x_0 x_1 y_1^2 z_0^2 w_0^2 - 2x_0 x_1 y_1^2 z_0^2 w_0 w_1 + x_1^2 y_1^2 z_0^2 w_0^2 + 2x_1^2 y_1^2 z_0^2 w_0 w_1 + 2x_1^2 y_1^2 z_0^2 w_1^2 \\
&+ x_0^2 y_0^2 z_0 z_1 w_0^2 - 2x_0^2 y_0^2 z_0 z_1 w_1^2 - x_0 x_1 y_0^2 z_0 z_1 w_0^2 - 2x_0 x_1 y_0^2 z_0 z_1 w_0 w_1 \\
&+ x_0 x_1 y_0^2 z_0 z_1 w_1^2 - x_1^2 y_0^2 z_0 z_1 w_0^2 + x_1^2 y_0^2 z_0 z_1 w_0 w_1 - 2x_1^2 y_0^2 z_0 z_1 w_1^2 \\
&- x_0^2 y_0 y_1 z_0 z_1 w_0 w_1 + 2x_0^2 y_0 y_1 z_0 z_1 w_1^2 - x_0 x_1 y_0 y_1 z_0 z_1 w_0^2 - 2x_0 x_1 y_0 y_1 z_0 z_1 w_0 w_1 \\
&- x_0 x_1 y_0 y_1 z_0 z_1 w_1^2 + 2x_1^2 y_0 y_1 z_0 z_1 w_0^2 - x_1^2 y_0 y_1 z_0 z_1 w_0 w_1 - 2x_0^2 y_1^2 z_0 z_1 w_0^2 \\
&+ x_0^2 y_1^2 z_0 z_1 w_0 w_1 - x_0^2 y_1^2 z_0 z_1 w_1^2 + x_0 x_1 y_1^2 z_0 z_1 w_0^2 - 2x_0 x_1 y_1^2 z_0 z_1 w_0 w_1 \\
&- x_0 x_1 y_1^2 z_0 z_1 w_1^2 - 2x_1^2 y_1^2 z_0 z_1 w_0^2 + x_1^2 y_1^2 z_0 z_1 w_1^2 + 2x_0^2 y_0^2 z_1^2 w_0^2 + 2x_0^2 y_0^2 z_1^2 w_0 w_1 \\
&+ x_0^2 y_0^2 z_1^2 w_1^2 - 2x_0 x_1 y_0^2 z_1^2 w_0 w_1 - 2x_0 x_1 y_0^2 z_1^2 w_1^2 - 2x_1^2 y_0^2 z_1^2 w_0^2 \\
&- x_1^2 y_0^2 z_1^2 w_0 w_1 - 2x_1^2 y_0^2 z_1^2 w_1^2 - x_0^2 y_0 y_1 z_1^2 w_0^2 + x_0^2 y_0 y_1 z_1^2 w_1^2 \\
&- x_0 x_1 y_0 y_1 z_1^2 w_0^2 - 2x_0 x_1 y_0 y_1 z_1^2 w_0 w_1 + x_1^2 y_0 y_1 z_1^2 w_0^2 + x_1^2 y_0 y_1 z_1^2 w_1^2 \\
&+ 2x_0^2 y_1^2 z_1^2 w_0 w_1 + x_0^2 y_1^2 z_1^2 w_1^2 + 2x_0 x_1 y_1^2 z_1^2 w_0^2 - 2x_1^2 y_1^2 z_1^2 w_0 w_1 + 2x_1^2 y_1^2 z_1^2 w_1^2 \; .
\end{align*}
Note that this instance of the tetraquadric has no root scaling invariance. We project to the fundamental domain associated to $S_2$ as defined in eq.~(\ref{eq:FDtwoambient}). For the homogenous rescalings, we consider each ambient space sperately and project into the fundamental domain as in eq.~(\ref{eq:homcan_map}). The spectral layer is realized as describe in section \ref{sec:spectral} for a product of ambient spaces.

\begin{figure}
    \centering
    \hspace*{-1.5cm}
\includegraphics[width=1.2\textwidth]{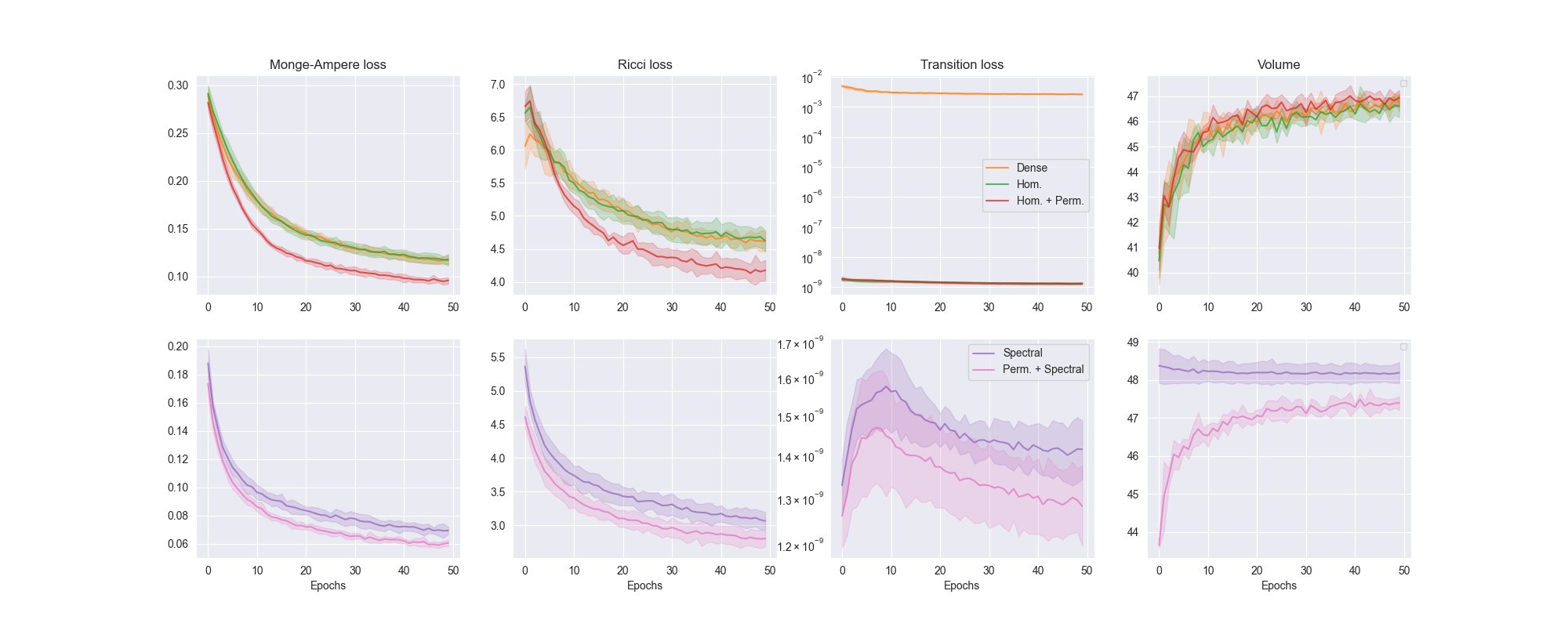}
    \caption{Performance of different ML models on a $S_2$ invariant tetraquadric CY 3-fold. From left to right: Monge--Amp\`ere, Ricci, and transition loss evolution on  validation data; volume computed with trained metric on all data (ground truth is 48). The upper (lower) row shows dense (spectral) networks with and without G-canonicalization layers.}
    \label{fig:tetra_losses}
\end{figure}

We perform analogous experiments as in the previous section, and report the validation results for five experiments trained over 90~000 points for 50 epochs in figure~\ref{fig:tetra_losses}. As expected from the previous experiment, adding the homogenous canonicalization layer leads to an almost vanishing transition loss while it does not improve the performance on the Monge--Amp\`ere and Ricci loss. However, adding the permutation canonicalization layer to the architecture improves performance both on Monge--Amp\`ere and Ricci loss. The spectral layer performs slightly better than the previous configuration, while the combination of permutation canonicalization and spectral layer performs best. 

\begin{table}[]
    \centering
    \begin{tabular}{|r|c|c|}
    \hline
       \textbf{Model}  & \textbf{$\sigma$ measure} & \textbf{$\mathcal{R}$ measure} \\
         \hline
       FS metric  & 0.4393 & 7.0902  \\
       Dense  & 0.1168 $\pm$ 0.0020 & 4.5680 $\pm$ 0.0823  \\
       Hom.  & 0.1183 $\pm$ 0.0017 & 4.5906 $\pm$ 0.0481  \\
       Hom. + Perm.  & 0.0960 $\pm$ 0.0018 & 4.1339 $\pm$  0.0882  \\
       Spectral  & 0.0665 $\pm$ 0.0002 &  2.9522 $\pm$ 0.0373 \\
       Perm. + Spectral  & 0.0582 $\pm$  0.0007 & 2.6980 $\pm$ 0.0326 \\
       \hline
    \end{tabular}
    \caption{$\sigma$ and $\mathcal{R}$ measure for the tetraquadric evaluated over a test set of 50~000 points. The values are rounded to four decimals and the error is given by the standard deviation of five experiments.}
    \label{tab:tetraquadric}
\end{table}

For the volume, we make the following observations. The spectral layer is most stable over all epochs and follows, within a numerical error, the ground truth of $48$  while all other combinations show higher variance and seem to converge to a slightly shifted value. As discussed in section \ref{sec:fundproj}, the variance could be explained due to the fact that the fundamental domain projections introduce discontinuities which might skew calculations, like the volume, which require integration over the entire manifold. However, the shifted volume prediction on the tetraquadric is not observed in the other experiments we have performed. The explanation behind this difference is likely that   $h^{(1,1)}>1$ for the tetraquadric, while it is 1 for the other CY manifolds, and that using a loss function to fix the K\"ahler class during training is more challenging on manifolds with higher $h^{(1,1)}$. 

In table \ref{tab:tetraquadric}, we compute the $\sigma$ and $\mathcal{R}$ measures for the different models. For the $\sigma$ measure, we observe a similar trend as for the Fermat quintic. Namely, adding any combination of canonicalization layers\footnote{Adding only the homogeneous canicalization layer does not improve the $\sigma$ measure. The main purpose of this layer is to reduce the transition loss.} has a positive effect on the measure. The spectral layer shows the most significant improvement, while the combination of permutation canonicalization layer and spectral layer perform the best. The $\mathcal{R}$ measure shows a different trend as for the Fermat quintic. Adding the permutation canonicalization layer improves the measure slightly, while again the spectral layer shows the most significant improvement. Again, the combination of permutation canonicalization layer and spectral layer performs the best.

\subsection{A 2-polynomial CICY}

\begin{figure}
    \centering
    \hspace*{-2cm}
\includegraphics[width=1.2\textwidth]{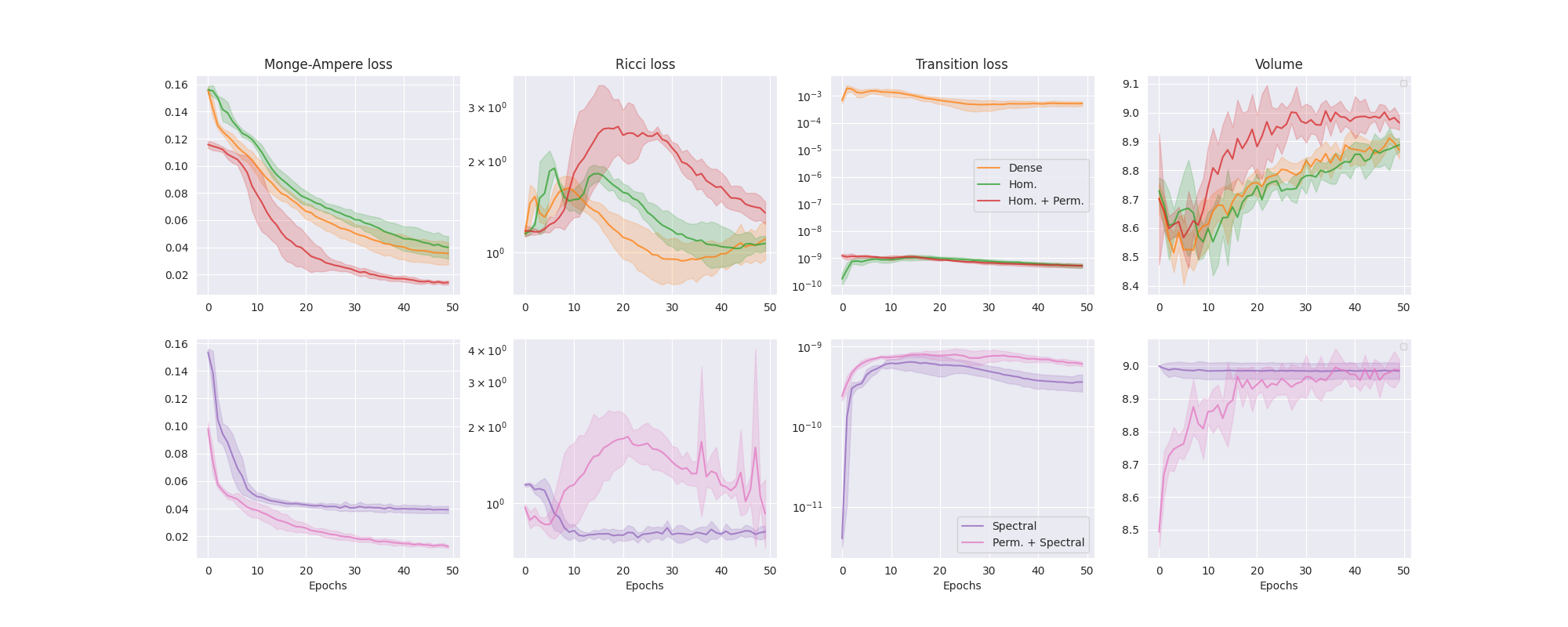}
    \caption{Performance of different ML models on a $[5|3\  3]$ CY 3-fold. From left to right: Monge--Amp\`ere, Ricci, and transition loss evolution on  validation data; volume computed with trained metric on all data (ground truth is 9). The upper (lower) row shows dense (spectral) networks with and without G-canonicalization layers.}
\label{fig:533_runs}
\end{figure}

We look at the  CICY $[5|3\ 3]$, where the two defining polynomials are given by (cf. \cite{Butbaia:2024tje})
\[
\begin{split}
    p(x_0, x_1, x_2, x_3, x_4, x_5) &:= x_0^3 + x_1^3 + x_2^3 - 3x_3x_4x_5,\\
    q(x_0, x_1, x_2, x_3, x_4, x_5) &:= x_3^3 + x_4^3 + x_5^3 - 3x_0x_1x_2.
\end{split}
\]

This CICY has a discrete group $\mathbb{Z}_2 (S_3 \times S_3)$ of permutation symmetries, where the non-trivial element in $\mathbb{Z}_2$ acts by sending  $[x_0: x_1: x_2: x_3: x_4: x_5]$ to $[x_3: x_4: x_5: x_0: x_1: x_2] $.
while the first $S_3$ factor permutes the coordinates $(x_0, x_1, x_2)$ and the second $S_3$ factor permutes the coordinates $(x_3, x_4, x_5)$. {The group does not act freely.}

We implement a canonicalization for the action of $\mathbb{Z}_2 (S_3 \times S_3)$ in two steps. In the first step, we reorder the coordinates in the first triple and the second triple in descending order with respect to norm, separately.
\[
\begin{split}
      &[ x_0 : x_1 : x_2 : x_3 : x_4 : x_5 ] \mapsto [x_{\sigma(0)} : x_{\sigma(1)} : x_{\sigma(2)} : x_{\tau(3)} : x_{\tau(4)} : x_{\tau(5)}]\\
      &\text{where}\ (\sigma , \tau) \in S_3 \times S_3,\ |x_{\sigma(0)}| \geq |x_{\sigma(1)}| \geq |x_{\sigma(2)}|,\   |x_{\tau(3)}| \geq |x_{\tau(4)}| \geq  |x_{\tau(5)}|.
\end{split}
\]
In the second step, we shift such that the triple with highest sum of norms is first.
We combine these two steps in one permutation canonicalization layer.

We perform experiments similar to the previous sections where we combine homogeneous and permutation canonicalization layers, and add them to the dense and spectral $\phi$-models. In figure \ref{fig:533_runs}, we see that the homogeneous canonicalization layer leads to an almost vanishing transition loss when combined with the dense network.
Regarding volume computation, we can see that adding the permutation canonicalization layer helps the dense network to converge faster to the correct volume, but it worsens the computation for the spectral network.
With respect to the Monge--Amp\`ere loss, the permutation canonicalization layer improves the performance of both networks by a factor of 2.
In contrast to the previous experiments, the Ricci loss shows different trends from the Monge--Amp\`ere loss.
Only the spectral network reaches a clear plateau value after 10 epochs, while the other networks seems to oscillate.
The permutation canonicalized models has an initial rise followed by a steep descent; {since the model has not settled, further training might improve the Ricci loss, and training against the Ricci loss, in addition to the MA loss, might also improve performance, as discussed for the Fermat quintic experiments.}

In table \ref{tab:533}, we compute the $\sigma$ and $\mathcal{R}$ measures for the different models. For the $\sigma$ measure we observe almost identical performance for all architectures. {The $\mathcal{R}$ measure does not improve for any architecture except the spectral model, which is in accordance with the Ricci loss validation plots. The explanation is likely that the loss landscape for this geometry is hard to navigate, and that models get stuck in a local minimum.} This is supported by the observation that even the standard, dense $\phi$-model shows poorer performance {on the $\mathcal{R}$ measure} than on the other geometries.

\begin{table}[]
    \centering
    \begin{tabular}{|r|c|c|}
    \hline
       \textbf{Model}  & \textbf{$\sigma$ measure} & \textbf{$\mathcal{R}$ measure} \\
         \hline
       FS metric  & 0.2620 & 1.0907  \\
       Dense  & 0.1971 $\pm$ 0.0015 & 0.9796 $\pm$ 0.1174  \\
       Hom.  & 0.1974 $\pm$ 0.0020 & 1.0003 $\pm$ 0.0463  \\
       Hom. + Perm.  & 0.1983 $\pm$ 0.0006 & 1.2831 $\pm$  0.1380  \\
       Spectral  & 0.1944 $\pm$ 0.0002 &  0.7203 $\pm$ 0.0291 \\
       Perm. + Spectral  & 0.1981 $\pm$  0.0007 & 2.2978 $\pm$ 2.5123 \\
       \hline
    \end{tabular}
    \caption{$\sigma$ and $\mathcal{R}$ measure for the $[5|3\  3]$ CY 3-fold evaluated over a test set of 50~000 points. The values are rounded to four decimals and the error is given by the standard deviation of five experiments.}
    \label{tab:533}
\end{table}

\section{Ricci-flat metrics on non simply connected CY manifolds}
\label{sec:freecanon}

In heterotic string compactifications, phenomenologically interesting particle physics models are often based on CY manifolds that are not simply connected. Such manifolds admit Wilson lines, which allow us to break the heterotic $E_8$ group to the gauge group of the Standard Model, and achieve the correct number of particle generations; this idea is pedagogically explained in \cite{green2012superstring}, and dates back to \cite{Candelas:1985en,WITTEN198575,1985PhLB..158...33B}. There is a large literature on the use of discrete symmetries in heterotic model building, and the reader is referred to the papers cited above for more discussion on this topic. In this section, we will follow \cite{Braun:2007sn} and study the Ricci-flat metric on a quotient $Q$ of a smooth, 5-parameter quintic $X$ by a freely acting $\mathbb{Z}_5^2$ group.    As explained e.g. in \cite{green2012superstring,Braun:2007sn}, $Q$ has Euler characteristic -8, and leads to a particle physics model with 4 generations, so we will refer to this geometry as a 4-generation quotient of the quintic. 

In more detail, we will study the 5-parameter quintic CY specified, in $\mathbb{CP}^4$, by the zero locus of
\begin{equation}
\begin{split}
       p = &\, (z_0^5 + z_1^5 + z_2^5 + z_3^5 + z_4^5) \\  
    &+\, \psi_1 z_0 z_1 z_2 z_3 z_4  \\
    &+\, \psi_2 (z_0^3 z_1 z_4 + z_0 z_1^3 z_2+z_0 z_3 z_4^3+z_1 z_2^3 z_3 + z_2 z_3^3 z_4) \\
    &+\, \psi_3 (z_0^2 z_1 z_2^2 + z_1^2 z_2 z_3^2+z_2^2 z_3 z_4^2+z_3^2 z_4 z_0^2 + z_4^2 z_0 z_1^2) \\
    &+\, \psi_4 (z_0^2 z_1^2 z_2 + z_1^2 z_2^2 z_3+z_2^2 z_3^2 z_4+z_3^2 z_4^2 z_0 + z_4^2 z_0^2 z_1) \\
    &+\, \psi_5 (z_0^3 z_2 z_3 +  z_1^3 z_3 z_4+z_2^3 z_4 z_0+z_3^3 z_0 z_1 + z_4^3 z_1 z_2) \, .
\end{split}
\end{equation}
We pick a generic instance of this family of CY manifolds by drawing $\psi_i$ from $\mathcal{N}(0, 1)$.

As is shown in ref.~\cite{Braun:2007sn}, this 5-parameter quintic $X$  is invariant under a  group $G= \mathbb{Z}_5^2$ generated by 
\begin{equation}
g_1 =
    \begin{pmatrix}
        0 & 0 & 0 & 0 & 1 \\
        1 & 0 & 0 & 0 & 0 \\
        0 & 1 & 0 & 0 & 0 \\
        0 & 0 & 1 & 0 & 0 \\
        0 & 0 & 0 & 1 & 0 \\
    \end{pmatrix}
    \; , \; 
g_2 =
    \begin{pmatrix}
        1 & 0 & 0 & 0 & 0 \\
        0 & e^{\frac{2\pi i}{5}} & 0 & 0 & 0 \\
        0 & 0 & e^{\frac{4\pi i}{5}} & 0 & 0 \\
        0 & 0 & 0 & e^{\frac{6\pi i}{5}} & 0 \\
        0 & 0 & 0 & 0 & e^{\frac{8\pi i}{5}} \\
    \end{pmatrix}    
\end{equation}
Moreover, the group $G$ is shown to act freely on $X$, so that the quotient $Q=X/G$ is smooth.

In ref.~\cite{Braun:2007sn}, Braun and collaborators obtain an approximation for the Ricci-flat CY metric on $Q$ using the Donaldson algorithm. As was mentioned in the introduction of this paper, this algorithm computes the CY metric via an expansion in a spectral basis, consisting of degree-$k$ homogeneous polynomials, using a fixed point iteration scheme to determine the coefficients of the expansion. The accuracy of metric improves as $k \to \infty$. The discrete symmetry is imposed as a restriction on the spectral basis, and results in a slower growth of the basis dimension. The invariant spectral basis can thus be constructed for larger $k$ than the non-invariant one; consequently more accurate approximations for the metric can be achieved.

However,  it is somewhat non-trivial to determine the basis of sections on $Q$, i.e. identify the linearly independent $G$-invariant sections on $X$ that restrict to $Q$, already for moderate values of $k$. There are two subtleties: first, polynomials of the form $q_i p$, where $p=0$ defines the CY hypersurface, evaluate to zero on the CY, and must be removed from the invariant basis. Second, the generators $g_1, g_2$ only commute up to homogeneous rescaling of the coordinates, which makes the construction of invariant polynomials non-trivial. 
Braun and collaborators rely on algorithms of invariant theory,  implemented using SINGULAR \cite{DGPS}, to identify the basis, and explore these methods up to $k =25$. After running the Donaldson algorithm on a a sample with $10^6$ points, they find that the $\sigma$ measure evaluates to $0.015$ for $k =25$ on an unseen test set of 20~000 points.\footnote{
When comparing the performance against number of basis elements, they find that the $G$-invariant metric approximation  has smaller $\sigma$ error than a metric computed using the unrestricted spectral basis.}

In this section, we revisit the computation of the Ricci-flat CY metric on $Q$ using the methods introduced in section \ref{sec:fundproj}. We thus encode the $G$-invariance as non-trainable canonicalization layers in the $\phi$-model of \cymetric, and compare the performance of these invariant ML models with the standard dense counterpart, as well as spectral networks. 
{Notice that the predicted metric using the above canonicalization layers is on the covering space $X$, rather than being on the quotient $Q$: 
the metric is invariant under the action of $G$, and so  constant along the fibers, but it is expressed in coordinate patches on  $X$.
Subsequently, in order to obtain the metric on the quotient, one needs to pushforward the learned metric via the projection from  $X$ to $Q$.
In this work, we do not execute this last step.}

\begin{figure}
    \centering
    \hspace*{-1.5cm}
\includegraphics[width=1.2\textwidth]{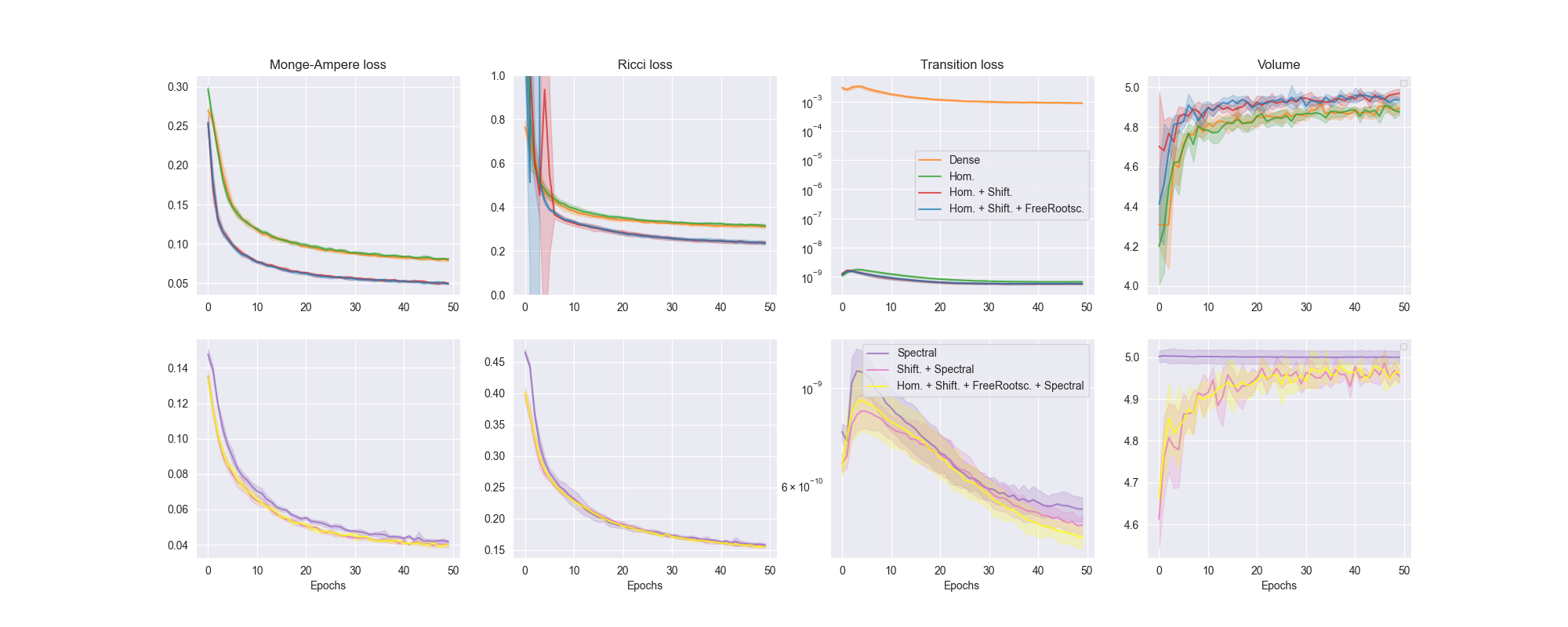}
    \caption{Performance of different ML models on a  4-generation quintic. From left to right: Monge--Amp\`ere, Ricci, and transition loss evolution on  validation data; volume computed with trained metric on all data (ground truth is 5). The upper (lower) row shows dense (spectral) networks with and without G-canonicalization layers.}   \label{fig:4Gquintic}
\end{figure}

We implement the canonicalization layers as follows. 
Let $G_i \subset G$ be the subgroup generated by the generator $g_i$ for $i=1,2$. 
Then, by projecting on the fundamental domains of $G_1$ and $G_2$ consecutively, we construct an invariant function on $X$ w.r.t $G$ or, in other words, a function on the smooth quotient $Q = X/G$. A technical complication, compared to the Fermat quintic canonicalization layers of Sec.~\ref{sec:fermatexp}, is that we need to address the non-commutativity of $G_1$ and $G_2$. This can be resolved by inserting the homogeneous canonicalization layer before the other $G$-canonicalization layers; $g_1$ and $g_2$ commute up to homogeneous rescalings.

The fundamental domains of $G_1$ and $G_2$ are given by
\begin{align}
    \mathcal{F}_1 &:= \{[z_0:z_1:z_2:z_3:z_4] \in \proj{4} |\ \forall i : |z_0| \geq |z_i|  \},\\
    \mathcal{F}_2 &:= \{[z_0:z_1:z_2:z_3:z_4] \in \proj{4} |\ 0\leq \arg(z_0) \leq 2\pi/5 \}.
\end{align}
To project on the fundamental domains $\mathcal{F}_1$ and $\mathcal{F}_2$, we implement two canonicalization layers that act on the homogeneous coordinates $[z_0:z_1:z_2:z_3:z_4]$ of a point on $X$. The first layer returns the coordinate tuple shifted to the left such that the  highest norm entry takes the first spot, 
\begin{align*}
    h_{\rm shift}: \, (z_0,z_1,z_2,z_3,z_4) \mapsto &(z_i,z_{(i+1\mod5)},z_{(i+2 \mod 5)},z_{(i+3 \mod 5)},z_{(i+4 \mod 5)}),\\
    &\ \forall j:\ |z_i|\geq |z_j|.
\end{align*}
For the second canonicalization layer, we find the fifth root of unity $a$ such that $0\leq\arg(az_0)\leq 2\pi/5$ and then multiply the rest of the coordinates with successive powers of $a$ as follows,
\[
h_{\rm FreeRootSc}: \,(z_0,z_1,z_2,z_3,z_4) \mapsto (az_0,a^2z_1,a^3z_2,a^4z_3,z_4).
\]

Our experiments run over 50 epochs with the same settings as in section \ref{sec:exp}. The results on validation data are shown in figure~\ref{fig:4Gquintic} (we again omit the training plots which are more or less identical to the validation plots).  Just as observed in other experiments, including a homogeneous canonicalization layer, or a spectral layer, significantly reduces the transition loss. The validation plots of Monge-Amp\`{e}re and Ricci loss shows that canonicalization with respect to $G_1, G_2$ reduces the losses, albeit not by a large factor.\footnote{The spikes in the Ricci loss in the second plot of the top row in figure~\ref{fig:4Gquintic} are transient numerical fluctuations.} The reduction of the loss is more pronounced when $G$-canonicalization is added to the dense model, than when similar additions are made in the spectral network. The addition of canonicalization layers also has a small positive effect on the volume for the dense network. For the spectral network, the effect on the volume prediction is instead negative at initialization. While this effect is subdued by training, it is an indication that the canonicalization layers introduce discontinuities, which affect global predictions of the model in a negative way.  

In table \ref{tab:free_quintic}, we compute the $\sigma$ and $\mathcal{R}$ measure for all the models. We can observe an improvement on the $\sigma$ measure by adding any combination of canonicalization layers to the standard, dense $\phi$-model.\footnote{Again, except for the case in which only the homogeneous canonicalization layer is added, which performs on par with the dense network.}  The spectral network performs better than any model based on the standard model, and combining the spectral layer with canonicalization layers gives the lowest $\sigma$ measure. For comparison, the error measure values obtained by the fully canonicalized dense (spectral) networks correspond to the value obtained with $k$ about $15$ ($20$) using Donaldson's algorithm \cite{Braun:2007sn}.  For the $\mathcal{R}$ measure, we observe the same trend as for the $\sigma$ measure.

Summing up the results of this experiment, we note that the canonicalization approach is easily implemented on this quintic, and that the layers combine nicely with both the dense  and spectral $\phi$-models. Combining this with the pushforward, via the projection from  $X$ to $Q$, we can thus learn the metric on quotient CY manifolds using these machine learning techniques.   Furthermore, the accuracy of the prediction, as measured by the $\sigma$ and $\mathcal{R}$ measures, improves, and compares well to the results of ref.~\cite{Braun:2007sn}. We leave a more rigorous study of the performance gains that can be attained with these models, compared to previous work, for future hyperparameter optimization studies. What we can conclude already now is that, since there is no need to construct the $G$-invariant spectral basis, the $G$-canonicalization layers seem to provide a technical simplification compared to 
spectral methods used in ref.~\cite{Braun:2007sn}

\begin{table}[]
    \centering
    \begin{tabular}{|r|c|c|}
    \hline
       \textbf{Model}  & \textbf{$\sigma$ measure} & \textbf{$\mathcal{R}$ measure} \\
         \hline
       FS metric  & 0.4006 & 0.7876  \\
       Dense  & 0.0802 $\pm$ 0.0007 & 0.3070 $\pm$ 0.0048  \\
       Hom.  & 0.0818 $\pm$ 0.0011 &  0.3115 $\pm$ 0.0053  \\
       Hom. + Shift  & 0.0512 $\pm$  0.0009 &  0.2374 $\pm$ 0.0039  \\
       Hom. + Shift + FreeRootsc.  & 0.0523 $\pm$  0.0012 & 0.2396 $\pm$ 0.0062  \\
       Spectral  & 0.0419 $\pm$ 0.0004 & 0.1571 $\pm$ 0.0026 \\
       Shift + Spectral  & 0.0404 $\pm$  0.0014 & 0.1570 $\pm$ 0.0032 \\
       Hom. + Shift + FreeRootsc. + Spectral  & 0.0399 $\pm$  0.0011 & 0.1549 $\pm$ 0.0017 \\
       \hline
    \end{tabular}
    \caption{$\sigma$ and $\mathcal{R}$ measure for the 4-generation quotient of the quintic evaluated over a test set of 50~000 points. The values are rounded to four decimals and the error is given by the standard deviation of five experiments.}
    \label{tab:free_quintic}
\end{table}

\section{Conclusion}
\label{sec:conclude}
In this paper, we have addressed the question: how do we encode symmetries in a machine learned Ricci-flat CY metric? From a geometric point of view, CY manifolds may have discrete symmetries, with interesting consequences in mathematics and physics. From a machine learning perspective, these symmetries are manifest as invariances, under group transformations, of the input data of machine learning models designed to predict CY metrics. By constructing invariant ML models, one expects better performance and accuracy. A clear application of these invariant models are in string phenomenology, for example the computation of heterotic Yukawa couplings, which was recently explored using machine learning tools in \cite{Butbaia:2024tje,Constantin:2024yxh}. 

Our work is inspired by recent developments in the ML literature, that advocate to achieve model invariance by a pre-processing step, where the input data is projected into a fundamental domain of the relevant group \cite{aslan2022group,kaba2023equivariance}. We have used an approach where non-trainable layers perform these projections; collectively, we refer to these layers as canonicalization layers. This is a  modular implementation, and the layers can easily be exchanged to match different settings. While we have limited our studies to the \cymetric package \cite{Larfors:2021pbb,Larfors:2022nep}, we expect similar canonicalization layers can be added with ease in other ML packages for CY geometry.

We have performed experiments on different CICY manifolds, comparing the performance of the standard $\phi$-model of the \cymetric package with and without canonicalization layers, {but otherwise identical hyperparameter settings. We have also compared  with the spectral network, originally proposed by Berglund and collaborators \cite{Berglund:2022gvm}, which may also be combined with canonicalization layers. We find that adding canonicalization layers increases performance on the losses that are used to train the networks. In particular, the increase} in performance on the transition loss is several orders of magnitudes for all architectures. {The volume, controlled by the K-class loss, converges faster if canonicalization layers are added to the standard $\phi$-model.  However, the volume seems to be captured best by the spectral networks without canonicalization. Being a global property, this could be explained by discontinuities, related to the boundaries of fundamental domains, that the canonicalization layers may introduce. These are not present in the spectral layer. }

The improvement on the Monge--Amp\`ere loss is noticeable but not very large. Combining the spectral layer with the canonicalization layers seems to additionally increase performance while it introduces variance in the global prediction of the volume. However, adding these custom layers can still be of use regarding performance, as in many cases the losses converge faster, allowing one to reduce the number of training epochs. In addition, it has not yet been tested how the canonicalized layers perform in a systematic hyperparameter optimization study.

The Ricci loss yields different results depending on which geometry we look at. For the tetraquartic, as well as the 4-generation quotient of the quintic, we are quotienting by a freely acting symmetry group and we observe that canonicalization improves the Ricci loss on validation data, under training governed by the MA loss. The $\mathcal{R}$ error measures also decrease for canonicalized networks, compared to the standard dense and spectral $\phi$-models. For the Fermat quintic and the $[5|3\  3]$ CY 3-fold, adding the canonicalization layers instead increases the Ricci loss on validation data, and leads to a higher Ricci measure on test data.  This could be explained by a more complicated loss landscape on these geometries, where training against the MA loss does not directly decrease the Ricci loss. The fact that the canonicalization layers  encode a symmetry group that does not act freely on these geometries may contribute to this complicated loss landscape. 

Furthermore, we would like to emphasize the advantage of the canonicalization layers following the philosophy of the Geometric Deep Learning program. The unrestricted models, consisting solely of dense layers, seem to already learn the symmetry of the data.  This is in accord with previous observations in the literature, see e.g. \cite{Anderson:2023viv}, showing that the performance of the same dense network improves slightly on symmetric CY manifolds, compared to generic ones. However, adding the canonicalization layers creates an architecture that knows about the symmetries a priori, allowing for better interpretability of the model.

Moreover, using the canonicalization approach, we have taken steps towards learning  metrics of non-simply connected CY manifolds. Such manifolds, $Q$, always have a simply connected CY manifold $X$ as a covering space; the relation between the two spaces is $Q = X/\Gamma$, where $\Gamma$ is a freely acting discrete symmetry group. The canonicalization idea provides a simple way of imposing $\Gamma$-invariance on the learned Ricci-flat metric of $X$, and thus, by pushforward, we can obtain the Ricci-flat metric of $Q$. While the approximation of CY quotient metrics is already well-studied, using other techniques \cite{Braun:2007sn,Headrick:2009jz}, the simplicity of the canonicalization layers, and the fact that they can be combined with relative ease, makes for a practical method, that can easily be adapted to different CY spaces and groups. 

Looking ahead, there are many other directions yet to be explored. We have already mentioned that a thorough hyperparameter study might improve the performance of canonicalized architectures. Once this is done, it would be interesting to establish if scaling laws, as first observed in ref.~\cite{2020arXiv200108361K}, hold for these ML models, and use these laws to compare the efficiency of different architectures. 
One could also explore the alternative invariant architectures suggested in ref.~\cite{yarotsky:2022}, i.e. construct invariant layers using polynomial invariants. While we have performed some investigations using this approach, they indicated that such layers quickly become too wide to be practically useful, even for the discrete symmetries prevalent among CY manifolds. In addition, this approach lacks the modular aspect of the canonicalization layers. A related avenue is to construct invariant models based on the invariant spectral basis to higher orders in $k$.  This is similar to what has been attempted for algebraic CY metrics in the past, e.g.~\cite{Douglas:2006rr,Headrick:2009jz,Braun:2007sn}, so one would expect performance gains, even if we expect the construction of such invariant layers face similar computational subtleties as reported in \cite{Braun:2007sn}.

Finally, there are many interesting phenomenological applications of $G$-invariant ML models, such as the numerical computation of Hermitian Yang Mills connections \cite{Ashmore:2021rlc} and Yukawa couplings \cite{Butbaia:2024tje,Constantin:2024yxh}. We expect  that the canonicalization approach will be particularly relevant for heterotic compactifications with non-standard embedding of the gauge group; as shown in the examples explored in \cite{Constantin:2024yxh}, several neural nets must be trained to learn the Ricci-flat CY metric, Hermitian-Yang-Mills connection and harmonic representatives of certain vector bundle cohomologies. The $G$-invariance-imposing canonicalization layers can be applied to all of these networks; any gain in the computation cost achieved by the canonicalization layer will be multiplied by the number of networks to which it is applied.  We expect this to be of importance for  phenomenological studies.

\section*{Acknowledgement}
This work was in part supported by  the Wallenberg AI, Autonomous Systems and Software Program (WASP) funded by the Knut and Alice Wallenberg Foundation. Support was also obtained from Vetenskapsrådet, grant no.~2020-03230, and from Uppsala University's AI4Research center. The computations were partly enabled by resources provided by the National Academic Infrastructure for Supercomputing in Sweden (NAISS), partially funded by Vetenskapsrådet through grant agreement no.~2022-06725. ML thanks the Erwin Schrödinger Institute in Vienna, and the organisers of the  program {\it The Landscape vs. the Swampland}, for hospitality during the completion of this work. MW thanks Nordita for hospitality during the completion of this work.

We thank  Lara Anderson, Georgios Dimitroglou Rizell, Cristofero Fraser-Taliente, James Halverson, Thomas Harvey, Daniel Platt, Carlos Rodriguez, Fabian Ruehle, Robin Schneider, and Ayca Özcelikkale for valuable discussions and suggestions. We also thank Carlos Rodriguez and Robin Schneider for comments on a draft of this paper;  Fabian Ruehle for his continuous work in maintaining the \cymetric package; Muyang Liu and Martin Svärdsjö for  collaborations on related topics; and, finally, Daniel Persson and Jan Gerken for organizing the Mathy-AI Journal Club, as well as its attendants, for providing a lot of inspiration for this work.

\clearpage
\appendix

\section{CY geometry}
\label{ap:isometry}

In this section, we prove that symmetries of the defining polynomial of a CICY manifold are isometries of the Ricci-flat metric. This should be well-known, but we have not found the proof in the standard references on the subject, which motivates including this section.

Note that the notation in this section differs from other sections, where $J$ is used to define the K\"ahler form. Here, we will follow the prevalent notation of the mathematical literature and use $\omega$ for the K\"ahler form, and $J$ for a complex structure. 

We start with some preliminary definitions from §8.4 in \cite{Nakahara:2003nw} . 
Let $X$ be a complex manifold with a complex structure $J$.
Then a hermitian metric $g$ on $X$ is a Riemann metric that preserves the complex structure i.e $g(J\cdot, J\cdot) = g(\cdot, \cdot)$.
Given a hermitian metric $g$ on $X$, we can define a 2-form $\omega(\cdot, \cdot) := g(J\cdot, \cdot)$ on $X$ which is called the \emph{Kähler} form, and  $X$ is called \emph{Kähler} if its K\"ahler form is closed, i.e. $d\omega = 0$. 
Hence a K\"ahler manifold is equipped with three structure compatible structures $J, g$ and $\omega$.
A fact from K\"ahler geometry tells us that given any two of the three structures, the compatibility conditions between them completely determine the third one.
This immediately implies that if a given map between K\"ahler manifolds preserve any two of the structures then it preserves the third one.
Finally, a map $f: X\to X$ is holomorphic if and only if it satisfies the Riemann-Cauchy equation given by
\[
J \circ df = df \circ J.
\]

\begin{definition}
    A Calabi-Yau manifold $X$ is a compact K\"ahler manifold with a vanishing first Chern class, or equivalently with a nowhere-vanishing holomorphic top form. 
\end{definition}

Let $X$ be a Calabi-Yau manifold embedded in a product of complex projective spaces.
The Fubini-Study form restricts to a K\"ahler form $\omega_{FS}$ on $X$. 
The Calabi-Yau conjecture, or more precisely theorem, states that there exists a unique smooth function $\phi: X \to \mathbb{R}$, up to addition of a constant, such that the hermitian metric $g_{CY}$ associated with the form $\omega_{CY} := \omega_{FS} + i\del\delbar \phi$ is Ricci-flat.
In the following propositions, we fix $X$ to be a Calabi-Yau manifold embedded in a product of projective spaces.

\begin{proposition}\label{proposition: isometry_of_FS_implies_isometry_of_CY_flat}
    Let $f: X\to X$ be a biholomorphic isometry of $g_{FS}$ then $f$ is an isometry of the Ricci-flat metric $g_{CY}$ on $X$ i.e $f^* g_{CY} =g_{CY}$.
\end{proposition}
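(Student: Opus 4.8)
The plan is to reduce the statement to the \emph{uniqueness} part of the Calabi--Yau theorem, rather than attempting to show directly that $\phi \circ f$ and $\phi$ agree. The strategy is to produce a second Ricci-flat Kähler form in the class $[\omega_{FS}]$ out of $f$, and then let uniqueness collapse it onto $\omega_{CY}$. First I would record that $f$ preserves the Fubini--Study form: since $f$ is holomorphic it satisfies the Riemann--Cauchy equation $J \circ df = df \circ J$ and hence preserves $J$, while by hypothesis $f^* g_{FS} = g_{FS}$. Because $\omega_{FS}(\cdot,\cdot) = g_{FS}(J\cdot,\cdot)$, a map preserving two of the three compatible structures preserves the third, so $f^*\omega_{FS} = \omega_{FS}$.

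Next I would pull back the defining relation $\omega_{CY} = \omega_{FS} + i\,\del\delbar\phi$. As $f$ is holomorphic, its pullback commutes with $\del$ and $\delbar$, so
\[
f^*\omega_{CY} = f^*\omega_{FS} + i\,\del\delbar(\phi\circ f) = \omega_{FS} + i\,\del\delbar(\phi\circ f).
\]
Since $i\,\del\delbar(\phi\circ f)$ is $d$-exact, this exhibits $f^*\omega_{CY}$ as a closed, positive $(1,1)$-form lying in the same cohomology class as $\omega_{FS}$; positivity and closedness are preserved under pullback by a biholomorphism. Moreover, because $f$ preserves $J$, the metric $f^* g_{CY}$ is again Hermitian and its associated Kähler form is exactly $f^*\omega_{CY}$. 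Thus $f^*\omega_{CY}$ is a Kähler form in the class $[\omega_{FS}] = [\omega_{CY}]$.

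The third step is to check that $f^* g_{CY}$ is Ricci-flat. This is the one point requiring genuine Riemannian input: the Levi-Civita connection, its curvature, and the Ricci tensor are all constructed functorially from the metric, so they are natural under diffeomorphisms, giving $\mathrm{Ric}(f^* g_{CY}) = f^*\mathrm{Ric}(g_{CY}) = f^*(0) = 0$. Consequently $f^*\omega_{CY}$ is a Ricci-flat Kähler form in the class $[\omega_{FS}]$.

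Finally I would invoke uniqueness. The Calabi--Yau theorem supplies a unique Ricci-flat Kähler form in the class $[\omega_{FS}]$ (equivalently, $\phi$ is unique up to an additive constant). Since both $\omega_{CY}$ and $f^*\omega_{CY}$ are Ricci-flat Kähler forms in that class, they must coincide, $f^*\omega_{CY} = \omega_{CY}$; as $f$ also preserves $J$, preserving these two structures forces $f^* g_{CY} = g_{CY}$, which is the claim. The hard part will be the third step: one must be comfortable asserting that the pulled-back metric is genuinely Ricci-flat, which rests on the naturality of curvature under diffeomorphisms. Everything else is bookkeeping built on the identity $f^*\del\delbar = \del\delbar f^*$ valid for holomorphic $f$, together with the ``two-of-three determines the third'' principle for Kähler data.
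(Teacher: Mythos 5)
Your proof is correct and follows essentially the same route as the paper: pull back $\omega_{CY} = \omega_{FS} + i\,\del\delbar\phi$ using $f^*\omega_{FS}=\omega_{FS}$ and $f^*\del\delbar = \del\delbar f^*$, note $\mathrm{Ric}(f^*g_{CY}) = f^*\mathrm{Ric}(g_{CY}) = 0$ by naturality of curvature, and conclude by the uniqueness part of the Calabi--Yau theorem. You merely spell out some steps the paper leaves implicit (the two-of-three argument for $f^*\omega_{FS}=\omega_{FS}$ and the exactness of the correction term).
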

\begin{proof}
    The map $f$ preserves the Fubini-Study form $\omega_{FS}$ on $X$.
    Using this and the fact that the pullback via a holomorphic map commutes with the Dolbeault operators, we get
    \[
    f^*\omega_{CY} = f^*\omega_{FS} + if^*\del\delbar \phi = \omega_{FS} + i\del\delbar (\phi\circ f).
    \]

    Notice that $f^*g_{CY}$ is the associated hermitian metric with $ f^*\omega_{CY}$, and 
    \[
    Ric(f^*g_{CY}) = f^* Ric(g_{CY}) = 0 \, .
    \]
    Therefore, by the uniqueness part of the Calabi-Yau theorem, we have $f^* \omega_{CY} = \omega_{CY}$ and thus $f^* g_{CY} = g_{CY}$.
\end{proof}
\begin{proposition}
    Let $\phi: X \to \mathbb{R}$ be a smooth function that satisfies the equation $\omega_{CY} = \omega_{FS} + i\partial \Bar{\partial} \phi $. Assume that $\mathcal{S}$ is a finite group of biholomorphic symmetries of $\omega_{FS}$.
    Then $\phi$ is invariant under the action $\mathcal{S}$, i.e. for all $f\in \mathcal{S}$, we have $f^* \phi = \phi \circ f = \phi$.
\end{proposition}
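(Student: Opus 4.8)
The plan is to reduce the invariance of $\phi$ to the uniqueness argument already used in Proposition~\ref{proposition: isometry_of_FS_implies_isometry_of_CY_flat}, obtaining $\phi$-invariance up to an additive constant, and then to kill that constant by a group-theoretic argument that is the only place where finiteness of $\mathcal{S}$ is used. First I would fix $f \in \mathcal{S}$ and note that, although $f$ is assumed only to preserve $\omega_{FS}$, the compatibility of the triple $(J, g, \omega)$ recalled above forces a biholomorphic (hence $J$-preserving) symmetry of $\omega_{FS}$ to preserve $g_{FS}$ as well; thus $f$ is a biholomorphic isometry of $g_{FS}$ and Proposition~\ref{proposition: isometry_of_FS_implies_isometry_of_CY_flat} applies, giving $f^*\omega_{CY} = \omega_{CY}$. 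Pulling back the defining relation $\omega_{CY} = \omega_{FS} + i\del\delbar\phi$, and using that $f^*$ commutes with the Dolbeault operators (since $f$ is holomorphic) together with $f^*\omega_{FS} = \omega_{FS}$, I would arrive at
\[
\omega_{FS} + i\del\delbar(\phi\circ f) = \omega_{FS} + i\del\delbar\phi,
\]
so that $i\del\delbar(\phi\circ f - \phi) = 0$.

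The next step is to conclude that $u_f := \phi\circ f - \phi$ is constant. On the compact K\"ahler manifold $X$, the vanishing of $\del\delbar u_f$ makes $u_f$ pluriharmonic, hence harmonic, and a harmonic function on a compact manifold is constant by the maximum principle. Therefore $\phi\circ f = \phi + c_f$ for some $c_f \in \mathbb{R}$. I would also remark that $c_f$ is independent of the choice of $\phi$ (which the Calabi--Yau theorem determines only up to an additive constant), since replacing $\phi$ by $\phi + a$ leaves $u_f$ unchanged; this is why the statement can assert exact equality for any admissible $\phi$.

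The main point is to show $c_f = 0$, and here finiteness of $\mathcal{S}$ is essential. The assignment $f \mapsto c_f$ is a group homomorphism from $\mathcal{S}$ to $(\mathbb{R}, +)$: comparing $\phi\circ(f\circ g) = (\phi\circ f)\circ g = \phi + c_g + c_f$ with $\phi\circ(f\circ g) = \phi + c_{fg}$ yields $c_{fg} = c_f + c_g$. Since $(\mathbb{R},+)$ is torsion-free while every element of the finite group $\mathcal{S}$ has finite order, this homomorphism is trivial; concretely, if $f^k = \mathrm{id}$ then $k\,c_f = c_{\mathrm{id}} = 0$, forcing $c_f = 0$. Hence $\phi\circ f = \phi$ for all $f \in \mathcal{S}$, as claimed. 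The genuinely delicate step is this last one: Monge--Amp\`ere uniqueness alone only pins down $\phi$ up to a constant, so exact invariance cannot follow from the PDE by itself, and it is precisely the nonexistence of nontrivial characters $\mathcal{S} \to \mathbb{R}$ for finite $\mathcal{S}$ that removes the ambiguity.
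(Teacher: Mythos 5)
Your proof is correct, and it follows the paper's strategy up to the point where $\phi\circ f-\phi$ is shown to be a constant $c_f$ (invoking Proposition~\ref{proposition: isometry_of_FS_implies_isometry_of_CY_flat}, the commutation of pullback with $\partial$ and $\bar\partial$, and either the uniqueness clause of the Calabi--Yau theorem or, as you do, the maximum principle for the pluriharmonic function $\phi\circ f-\phi$). Where you diverge is in killing the constant. The paper forms the group average $\frac{1}{|\mathcal{S}|}\sum_{f\in\mathcal{S}}f^*\phi$, shows it equals $\phi$ because $c_{f^{-1}}=-c_f$ forces $\sum_f c_f=0$, and then shows the average is manifestly $\mathcal{S}$-invariant by reindexing over the coset $h\cdot\mathcal{S}=\mathcal{S}$. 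You instead observe that $f\mapsto c_f$ is a homomorphism $\mathcal{S}\to(\mathbb{R},+)$ and that any such homomorphism from a finite group into a torsion-free group is trivial, so $c_f=0$ directly for each $f$. Both arguments are valid and use finiteness in an essential way; yours is slightly more economical in that it pins down each $c_f$ individually without introducing the averaged potential, while the paper's Reynolds-operator argument is the more standard template and generalizes to situations (e.g.\ compact groups with Haar measure) where an elementwise torsion argument is unavailable. Your side remark that $c_f$ is independent of the additive normalization of $\phi$ is accurate and explains why the proposition can assert exact, rather than merely projective, invariance.
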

\begin{proof}
    By Proposition \ref{proposition: isometry_of_FS_implies_isometry_of_CY_flat}, we have $\mathcal{S}$ is a group of isometries of $\omega_{CY}$.
    Consequently, for all $f\in \mathcal{S}$, we have
    \[
    \omega_{FS} + i\del\delbar \phi = \omega_{CY} = \omega_{FS} + i\del\delbar f^*\phi.
    \]
    Hence, by the Calabi-Yau theorem we have $f^*\phi$ and $\phi$ differs by a constant, which we denote by $c_f$.
    Note that $c_{f^{-1}} = - c_f$, and therefore
    $$\frac{1}{|\mathcal{S}|}\sum_{f\in \mathcal{S}} f^*\phi= \frac{1}{|\mathcal{S}|}\sum_{f\in \mathcal{S}} \phi + c_f
    = \phi + \frac{1}{|\mathcal{S}|}\sum_{f\in \mathcal{S}} c_f
    =  \phi.
    $$
    For an arbitrary element $h \in \mathcal{S}$, then
    \[
    h^* \phi =  \frac{1}{|\mathcal{S}|}\sum_{f\in \mathcal{S}} h^*(f^*\phi) 
    = \frac{1}{|\mathcal{S}|}\sum_{f\in \mathcal{S}} (f \circ h)^*\phi 
    = \frac{1}{|\mathcal{S}|}\sum_{\Tilde{f}\in h\cdot\mathcal{S}} \Tilde{f}^*\phi 
    = \phi,
    \]
    where the last equality is due to $\mathcal{S} \simeq h\cdot \mathcal{S}$.
   
\end{proof}

\section{Computing volume predictions}\label{ap:integration}
For each of the experiments, we check the global consistency of the prediction by computing the volume. This requires numeric integration, which can be performed using Monte-Carlo methods. We follow the procedure of the \cymetric package, which is described in \cite{Larfors:2022nep}. Thus, over a sample of points, the integral of a function $f$ is computed as the weighted sum
\begin{align}\label{intf1}
  \int_X\text{d}\vCY\; f=\int_X dA\,\frac{\text{d}\vCY}{dA}\; f\simeq\frac{1}{N}\sum_{i=1}^N w_i f(p_i)\; ,
\end{align}
where $dA$ is the Fubini--Study measure, and the weights of each point are defined as
\begin{align}
 w_i=\left.\frac{\text{d}\vCY}{dA}\right|_{p_i} =\left.\frac{\Omega \wedge \overline{\Omega}}{dA}\right|_{p_i}  \; .
\end{align}
These weights are computed by the point generator routines of the \cymetric package.

In order to check the global accuracy of the learned metric, we will instead compute
\begin{align}\label{intf2}
\int_X\text{d}\vCY\; f\simeq 
\frac{1}{N}\sum_{i=1}^N\tilde{w}_i\, 
{\rm det}( \gCY(p_i))\, f(p_i)\;,
\end{align}
using the auxiliary weights $\tilde{w}_i$, defined as $\tilde{w}_i = w_i/(\Omega \wedge \overline{\Omega})|_{p_i}$.

\section{Computing error measures}\label{ap:measures}
For each experiment, we check the accuracy of the prediction by computing the $\sigma$ and Ricci measure which are given by 
\begin{align}
    \sigma &= \frac{1}{\vCY} \int_X \left\lVert 1 - \frac{1}{\kappa} \frac{J \wedge J \wedge J}{\Omega \wedge \overline{\Omega}} \right\rVert_1 \,, 
\end{align}
where $\vCY = \int_X \d \vCY$ and $\kappa=(\int_X J \wedge J \wedge J)/\vCY$, as well as
\begin{equation}
    \mathcal{R} = \frac{(\int_X J \wedge J \wedge J)^{1/3}}{\vCY} \int_X \left\lVert \del \delbar \ln {\rm det}(\gCY) \right\rVert_1 \,.
\end{equation}
Following the default settings of the \cymetric package, the integrals are numerically evaluated as in Eq. (\ref{intf1}), except for the integral over the Ricci-scalar which is evaluated as in Eq. (\ref{intf2}).

\section{Computational performance}
Some of the experiments were run on an Intel\textsuperscript{\textcircled{\tiny R}}  Core\textsuperscript{TM} i7-10700K CPU with 8 cores. 64 GB of RAM were available during all experiments. Other experiments were run on clusters provided by NAISS, using either different CPU configurations or a GPU. 

The homogeneous layer combined with the permutation and root scaling layer trained on 90~000 points for $50$ epochs used a maximum of 11 GB of RAM and took approximately $45$ minutes to train. This RAM usage is higher than for the dense layer, and running experiments on a standard laptop with 16 GB of RAM would often crash due to lack of memory.
For comparison, the runtimes reported for the dense \cymetric models in Refs.~\cite{Larfors:2021pbb,Larfors:2022nep,Anderson:2023viv} were about one hour on a laptop.

\clearpage

\bibliographystyle{bibstyle}

\providecommand{\href}[2]{#2}\begingroup\endgroup

\end{document}